\newtheorem{theorem}{Theorem}
\newtheorem{lemma}[theorem]{Lemma}
\newtheorem{remark}[theorem]{Remark}
\crefname{claim}{Claim}{Claims}
\crefname{property}{Property}{Properties}
\crefname{transformation}{Transformation}{Transformations}
\algnewcommand{\LineComment}[1]{\State \textcolor{gray}{// #1}}
\newcolumntype{L}{>{$\displaystyle}l<{$}}
\newcommand*{\geomvec}[2]{(#1, #2)}
\newcommand*{\geomvecdim}[2]{(#1, #2)-dimensional}
\newcommand*{\vmgap}{Vector-Max-GAP}
\renewcommand*{\epsilon}{\varepsilon}
\newcommand*{\eps}{\epsilon}
\newcommand*{\floor}[1]{\left\lfloor #1 \right\rfloor}
\newcommand*{\ceil}[1]{\left\lceil #1 \right\rceil}
\newcommand*{\abs}[1]{\left\lvert #1 \right\rvert}
\newcommand*{\Th}{^{\textrm{th}}}
\DeclareMathOperator*{\argmin}{argmin}
\newcommand*{\WLoG}{W.l.o.g.}
\newcommand*{\wLoG}{w.l.o.g.}
\DeclareMathOperator*{\opt}{opt}
\newcommand*{\OPT}{\mathrm{OPT}}
\DeclareMathOperator{\poly}{poly}
\CatchFileEdef{\RELEASE}{"|kpsewhich -var-value RELEASE"}{\endlinechar=-1}%
\def\rem#1{{\marginpar{\raggedright\scriptsize #1}}}
\newcommand*{\todo}[1]{\textcolor{violet}{TODO: #1}}
\newcommand*{\aricolor}{red}
\newcommand*{\eklcolor}{blue}
\newcommand*{\kvncolor}{green!50!black}
\newcommand*{\arir}[1]{\rem{\textcolor{\aricolor}{$\bullet$ #1}}}
\newcommand*{\eklr}[1]{\rem{\textcolor{\eklcolor}{$\bullet$ #1}}}
\newcommand*{\kvnr}[1]{\rem{\textcolor{\kvncolor}{$\bullet$ #1}}}
\newcommand*{\todo}[1]{}
\newcommand{\arir}[1]{}
\newcommand{\eklr}[1]{}
\newcommand{\kvnr}[1]{}
\newcommand*{\epssmall}{\epsilon_{\mini}}
\newcommand*{\epslarge}{\epsilon_{\mammoth}}
\newcommand*{\lengthsym}{w}
\newcommand*{\heightsym}{h}
\newcommand*{\areasym}{a}
\newcommand*{\weightsym}{v}
\newcommand*{\profitsym}{p}
\newcommand*{\itemarea}[1]{\areasym(#1)}
\newcommand*{\itemlength}[1]{\lengthsym(#1)}
\newcommand*{\lengthinp}{\overrightarrow{\lengthsym}}
\newcommand*{\itemheight}[1]{\heightsym(#1)}
\newcommand*{\heightinp}{\overrightarrow{\heightsym}}
\newcommand*{\itemweight}[2]{\weightsym_{#1}(#2)} 
\newcommand*{\weightvec}[1]{\overrightarrow{\weightsym}(#1)} 
\newcommand*{\weightinp}{\overrightarrow{\weightsym}} 
\newcommand*{\itemprofit}[1]{\profitsym(#1)}
\newcommand*{\profitinp}{\overrightarrow{\profitsym}}
\newcommand*{\containerinstance}[3]{(#1, \lengthinp, \heightinp, \profitinp, \weightinp, #2, #3)}
\newcommand*{\vect}[1]{\protect\overrightarrow{#1}}
\newcommand*{\val}{\mathrm{val}}
\newcommand*{\Val}{\mathrm{VAL}}
\newcommand*{\vmg}{\textrm{Vector-Max-GAP}}
\newcommand*{\yaxis}{$y$-axis}
\newcommand*{\xaxis}{$x$-axis}
\newcommand*{\optgvks}{\operatorname{OPT}_{\mathrm{GVKS}}}
\newcommand*{\bigOhConst}[1]{O_{#1}(1)}
\newcommand*{\bigOh}[1]{O(#1)}
\newcommand*{\skewed}{\mathrm{skew}}
\newcommand*{\corr}{\mathrm{corr}}
\newcommand*{\cross}{\mathrm{cross}}
\newcommand*{\nice}{\mathrm{nice}}
\newcommand*{\bad}{\mathrm{bad}}
\newcommand*{\boxrect}{\mathrm{box}}
\newcommand*{\ra}{\mathrm{ra}}
\newcommand*{\grid}{\mathrm{grid}}
\newcommand*{\mini}{\mathrm{small}}
\newcommand*{\mammoth}{\mathrm{big}}
\newcommand*{\hor}{\mathrm{wide}}
\newcommand*{\ver}{\mathrm{tall}}
\newcommand*{\medium}{\mathrm{med}}
\newcommand*{\struct}{\mathrm{struct}}
\newcommand*{\andtext}{\textup{ and }}
\title{Approximation Algorithms for Generalized Multidimensional Knapsack}
\author{Arindam Khan $^*$ \and Eklavya Sharma $^*$ \and K. V. N. Sreenivas
\thanks{Department of Computer Science and Automation, Indian Institute of Science, Bengaluru, India.
{\tt arindamkhan@iisc.ac.in}, {\tt eklavyas@iisc.ac.in}, {\tt venkatanaga@iisc.ac.in}}}
\date{\empty}
\begin{document}
\maketitle
\setlength{\parindent}{0pt}
\setlength{\parskip}{0.5em}
\begin{abstract}
We study a generalization of the knapsack problem with geometric and vector
constraints. The input is a set of rectangular items, each with an associated profit and $d$ nonnegative weights ($d$-dimensional vector), and  a square knapsack.
The goal is to find a non-overlapping axis-parallel packing of a subset of items into the given knapsack such  that the vector constraints are not violated, i.e., the sum of weights of all the packed items in
any of the $d$ dimensions does not exceed one. We consider two variants of the
problem: $(i)$ the items are not allowed to be rotated, $(ii)$ items can be rotated by 90 degrees.

We give a $(2+\epsilon)$-approximation algorithm for this problem (both versions). In the process, we also study a variant of the maximum generalized assignment problem (Max-GAP), called Vector-Max-GAP, and
design a PTAS for it.
\end{abstract}

\section{Introduction}
The knapsack problem is a fundamental well-studied problem in the field of
combinatorial optimization and approximation algorithms. It is one of
Karp's 21 NP-Complete problems and has been studied extensively for more than a century \cite{kellererBook}.
A common generalization of the knapsack problem is the 2-D geometric knapsack problem. Here, the knapsack is a unit  square
and the items are rectangular objects with designated profits.
The objective is to pack a maximum profit
subset of items such that no two items overlap and all the items are packed
parallel to the axes of the knapsack (called axis-parallel or orthogonal packing).
There are two further variants of the problem depending on whether we are allowed to rotate the
items by 90 degrees or not.
Another well-studied variant of the knapsack problem is $d$-D vector
knapsack problem, where we are given a set of items, each with $d$-dimensional weight vector and a profit, and
a $d$-dimensional bin ($d$ is a constant).
The goal is to select a subset of items of maximum profit such that the sum of all weights
in each dimension is bounded by the bin capacity in that dimension.

In this paper we study a natural variant of knapsack problem, called $(2,d)$ Knapsack, which considers items with both geometric dimensions and the vector dimensions.
In this $2$-D geometric knapsack problem with $d$-D vector constraints
(where $d$ is a constant), or $(2,d)$ KS in short, the input set $I$ consists
of $n$ items. For any item $i\in I$, let $\itemlength{i},\itemheight{i},\itemprofit{i}$
denote the width, height and the profit of the item, respectively, and let $\itemarea{i}=
\itemlength{i}  \itemheight{i}$ denote the area of the item. Also, for
any item $i\in I$, let $\itemweight{j}{i}$ denote the weight of the item in
the $j\Th$ dimension. The objective is to pack a maximum profit subset of items
$J\subseteq I$ into a unit square knapsack in an axis-parallel, non-overlapping manner such
that for any $j\in[d]$, $\sum_{i\in J}\itemweight{j}{i}\le 1$.




We will also study a variant of Maximum Generalized Assignment Problem (Max-GAP).
In Max-GAP problem, we are provided with
a set of machines with designated capacities, and a set of items; an item's
size and value depends on the machine to which it is going to be assigned.
The objective is to assign a subset of items to machines such that the
obtained value is maximized while making sure that
no machine's capacity is breached. We define a variant of the Max-GAP problem
called \vmgap{}. In this problem, we additionally
have a $d$-dimensional weight vector associated with every item and a
$d$-dimensional global weight constraint on the whole setup of machines. The
objective is to find the maximum value obtainable so that no machine's capacity
is breached and the overall weight of items does not cross the global
weight constraint.

\subsection{Our contributions}
Our two main contributions are $(i)$ a $(2+\epsilon)$ approximation algorithm for
the $(2,d)$ KS problem $(ii)$ define a new problem called \vmgap{}
and obtain a PTAS for it.

To obtain the approximation algorithm for $(2,d)$ KS, we first obtain a
structural result using corridor decomposition, which shows that if we consider
an optimal packing and remove either vertical items (height$\gg$width) or horizontal
items (width$\gg$height), we can restructure the remaining packing to possess a
`nice' structure that can be efficiently searched for.

In \cite{l-packing}, which deals with the 2-D geometric knapsack problem,
the authors use the PTAS for the maximum generalized assignment
problem (Max-GAP) to efficiently search for such a nice structure. But due to the presence
of vector constraints, we needed a generalization of the Max-GAP problem which
we call the Vector-Max-GAP problem and obtain a PTAS for it to obtain the desired
approximation algorithm for $(2,d)$ KS.
\subsection{Related work}
The classical knapsack problem admits a fully polynomial time approximation scheme (FPTAS)
\cite{knapsack-ptas,lawler}.
Both 2-D geometric knapsack (2D GK) and 2-D vector knapsack problem are W[1]-hard \cite{GKW19, kulik2010there}, thus do not admit an EPTAS unless W[1]=FPT.  \footnote{EPTAS is a PTAS with running time $f(\epsilon)n^c$ where $c$ is a
constant that does not depend on $\eps$.}
However, $d$-D vector knapsack problem admits a PTAS \cite{vector-knapsack-ptas}.
On the other hand,  it is not known whether the 2D GK is APX-Hard or not, and finding a PTAS for the problem is one of the major open problems in the area.
The current best approximation algorithm \cite{l-packing} for this problem without
rotations achieves an approximation ratio of $17/9+\epsilon$.
If rotations are allowed, the approximation ratio improves to $3/2+\epsilon$.
In a special case known as the \emph{cardinality} case where each
item has a profit of one unit, the current best approximation factor is
$558/325+\epsilon$ without rotations and $4/3+\epsilon$ with rotations (see \cite{l-packing}).
PTASes exist for the \emph{restricted profit density}
case, i.e., the profit/area ratio of each item is upper bounded and lower bounded
by fixed constants (see \cite{bansal2009structural})
and the case where all the items are \emph{small} compared to the dimensions of the bin.
EPTAS  exists for the case where all the items are \emph{squares}
\cite{HeydrichWiese2017}. In the case where we can use \emph{resource augmentation}, i.e.,
we are allowed to increase the width or height of the knapsack
by a factor of $\epsilon$, we can construct a packing which gives the exact optimal
profit in polynomial time (see \cite{HeydrichWiese2017}).
2GK has also been studied under psedopolynomial setting \cite{2dk1, 2dk2}.
The following table summarizes the results for 2GK problem.
\begin{figure}[H]
\centering
\begin{tabular}{|l||c|c|}
\hline
&without rotations&with rotations\\\hline\hline
General case&$17/9+\epsilon$\cite{l-packing}&$3/2+\epsilon$\cite{l-packing}\\\hline
Cardinality case&$558/325+\epsilon$\cite{l-packing}&$4/3+\epsilon$\cite{l-packing}\\\hline
Restricted profit density&PTAS\cite{bansal2009structural}&PTAS\cite{bansal2009structural}\\\hline
Small items&folklore FPTAS (\cref{lem:nfdh-small})&folklore FPTAS (\cref{lem:nfdh-small})\\\hline
Square items&EPTAS \cite{HeydrichWiese2017}&irrelevant\\\hline
Resource Augmentation&exact solution\cite{HeydrichWiese2017}&exact solution\cite{HeydrichWiese2017}\\\hline
\end{tabular}
\caption{The state-of-the-art for the 2-D geometric knapsack problem.}
\end{figure}


A well-studied problem closely related to the knapsack problem is bin packing.
A natural generalization of the bin packing problem is the 2-D geometric bin packing (2GBP)
problem where items are rectangles and bins are unit squares. The current best
asymptotic approximation ratio of $(1+\ln1.5+\eps)$ for 2GBP is due to \cite{bansal2014binpacking}.
Another generalization of bin packing is the vector bin packing problem (VBP) where
each item has multiple constraints. For the current best approximation ratios of VBP, we refer the reader to \cite{BansalE016}.

There exist many other variants of the discussed packing problems such as strip packing
(see \cite{Galvez20, GGIK16}), maximum independent set of rectangles (see \cite{KhanR20}), storage allocation problems \cite{MomkeW20}, etc.
For an extensive survey of related packing problems, see \cite{CKPT17, Khan16}.



\subsection{Organization of the Paper}
In \cref{vec-max-gap}, we study the Vector-Max-GAP problem and
design a PTAS for it.
\cref{sec:mixknap} describes  the  $(2+\epsilon)$ approximation
algorithm for the $(2,d)$ Knapsack problem. In \cref{geomvecks-struct-intro,geomvecks-struct}, we use a structural result involving corridor decomposition and reduce our problem to a \emph{container packing problem}.
In \cref{cont-vecmaxgap,red-cont-vecmaxgap} we will discuss how to model the
container packing problem as an instance of the Vector-Max-GAP problem.
Finally, in \cref{mixknap-algo}, we put everything together to obtain the
$(2+\epsilon)$ approximation algorithm for the $(2,d)$ Knapsack problem.

\section{\vmgap{} problem and a PTAS}
\label{vec-max-gap}

We will formally define the \vmgap{} problem. Let $I$ be a set of $n$ items numbered
1 to $n$ and let $M$ be a set of $k$ machines, where $k$
is a constant. The $j\Th$ machine has a capacity $M_j$. Each item
$i\in I$ has a size of $s_j(i)$, value of $\val_j(i)$ in the $j\Th$ machine
($j\in [k]$). Additionally, each item $i$ also has a weight $w_q(i)$ in the
$q\Th$ dimension ($q\in [d]$, $d$ is a constant). Assume that for all $j\in[k]$,
$q\in[d]$ and $i\in[n]$, $M_j, w_q(i), s_j(i), \val_j(i)$ are all non-negative.

The objective is to assign a subset of items $J\subseteq I$
to the machines such that for any machine $j$,
the size of all the items assigned
to it does not exceed $M_j$. Further, the total weight of the set $J$ in any
dimension $q\in[d]$ must not exceed $W_q$, which is the global weight constraint
of the whole setup in the $q\Th$ dimension. Respecting these constraints,
we would like to maximize the total value of the items in $J$.

Formally, let $J$ be the subset of items picked and $J_j$ be the items
assigned to the $j\Th$ machine ($j\in[k]$).
The assignment is feasible iff the following constraints are satisfied:
\[ \forall q \in [d], \sum_{i \in J} w_q(i) \le W_q \tag{weight constraints} \]
\[ \forall j \in [k], \sum_{i \in J_j} s_j(i) \le M_j \tag{size constraints} \]

Let $\vect{M} = [M_1,M_2,\dots,M_k]$, $\vect{w}(i) = [w_1(i),w_2(i),
\dots,w_d(i)]$, $\vect{s}(i) = [s_1(i),s_2(i),\dots,s_k(i)]$,
$\vect{\val}(i) = [\val_1(i),\val_2(i),\dots,\val_k(i)]$.

Let $\vect{s} = [\vect{s}(1),\vect{s}(2),\dots,\vect{s}(n)]$,
$\vect{w} = [\vect{w}(1),\vect{w}(2),\dots,\vect{w}(n)],
\vect{\val} = [\vect\val(1),\vect\val(2),\dots,\vect\val(n)]$.
$\vect{W} = [W_1,W_2,\dots,W_d]$.

An instance of this problem is given by
$(I, \vect{\val}, \vect{s}, \vect{w}, \vect{M}, \vect{W})$.
We say that the set of items $J$ is feasible for
$(\vect{s}, \vect{w}, \vect{M}, \vect{W})$
iff $J$ can fit in machines of capacity
given by $\vect{M}$ and satisfy the global weight
constraint given by $\vect{W}$
where item sizes and weights are given by $\vect s$ and $\vect w$ respectively.

\subsection{Dynamic-Programming Algorithm for Integral Input}

Consider the case where item sizes and weights are integers.
\WLoG, we can assume that the capacities of machines,
$M_j$ $(j\in [k])$ and weight constraints, $W_q$ $(q\in[d])$, are also integers
(otherwise, we can round them down to the closest integers).

Arbitrarily order the items and number them from $1$ onwards.
Let $\Val(n, \vect{M}, \vect{W})$ be the maximum value obtainable
by assigning a subset of the first $n$ items to $k$ machines with capacities
given by $\vect{M}$ respecting the global weight constraint $\vect{W}$.
We can express $\Val(n, \vect{M}, \vect{W})$ as a recurrence.

\[ \Val(n, \vect{M}, \vect{W}) = \left\{\begin{array}{lr}
-\infty & \textrm{if } \neg(\vect{W} \ge 0 \wedge \vect{M} \ge 0)
\\ 0 & \textrm{if } n = 0
\\ \multicolumn{2}{l}{
{\displaystyle \max\left(\begin{array}{l} \Val(n-1, \vect{M}, \vect{W}),
\\ {\displaystyle \max_{j=1}^k \left( \val_j(n)
+ \Val\left(n-1, \vect{M} - \vect{s}(n)\cdot\vect{e}_j, \vect{W} - \vect{w}(n)\right)\right)} \end{array}\right)}
\; \textrm{else}}
\end{array}\right. \]

$\Val(n, \vect{M}, \vect{W})$ can be computed using dynamic programming.
We can find the subset of items that gives this much value and it is also easy to ensure that no item assigned to a machine has value 0 in that machine. There are $n\prod_{j=1}^k(M_j+1)\prod_{q=1}^d(W_q+1)$ items in the state space
and each iteration takes $\Theta(d + k)$ time.
Therefore, time taken by the dynamic programming solution
is $\Theta\left(n(d+k)\prod_{j=1}^k (M_j+1)\prod_{q=1}^d (W_q+1)\right)$.

\subsection{Optimal Solution with Resource Augmentation}

Let $\vect{\mu} = [\mu_1, \mu_2, \ldots, \mu_k]$ and $\vect{\delta} = [\delta_1, \delta_2, \ldots, \delta_d]$
be vectors whose values will be decided later. For $j\in [k]$, define $s_j'(i) = \ceil{s_j(i)/\mu_j},
M_j' = \floor{M_j/\mu_j} + n$.
For $q\in[d]$, define $w_q'(i) = \ceil{w_q(i)/\delta_q},W_q' = \floor{W_q/\delta_q} + n$.
\begin{lemma}
\label{thm:ra1}
Let $J$ be feasible for $(\vect{s}, \vect{w}, \vect{M}, \vect{W})$.
Then $J$ is also feasible for $(\vect{s'}, \vect{w'}, \vect{M'}, \vect{W'})$.
\end{lemma}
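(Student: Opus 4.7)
The plan is to show each feasibility constraint transfers over under the rounding, by exploiting the standard inequality $\lceil x \rceil \le x + 1$ and the fact that $|J_j| \le n$ and $|J| \le n$ absorb the accumulated rounding error, which is exactly why the extra $+n$ slack was added to $M_j'$ and $W_q'$.

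First I would handle the machine size constraints. Fix $j \in [k]$ and let $J_j \subseteq J$ be the items assigned to machine $j$ under the witnessing feasible assignment for $(\vect{s}, \vect{w}, \vect{M}, \vect{W})$; I would use the same assignment for the rounded instance. Starting from $\sum_{i \in J_j} s_j(i) \le M_j$, divide by $\mu_j > 0$ (the case $\mu_j = 0$ can be excluded or handled separately since otherwise $s_j'$ is ill-defined) and bound
\[
\sum_{i \in J_j} s_j'(i) \;=\; \sum_{i \in J_j}\Bigl\lceil \tfrac{s_j(i)}{\mu_j}\Bigr\rceil \;\le\; \sum_{i \in J_j}\Bigl(\tfrac{s_j(i)}{\mu_j} + 1\Bigr) \;\le\; \tfrac{M_j}{\mu_j} + |J_j| \;\le\; \tfrac{M_j}{\mu_j} + n.
\]
Since the leftmost quantity is an integer, it is at most $\lfloor M_j/\mu_j + n \rfloor = \lfloor M_j/\mu_j \rfloor + n = M_j'$, which is precisely the size constraint in the rounded instance.

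Next I would repeat the identical argument for each global weight dimension $q \in [d]$: from $\sum_{i \in J} w_q(i) \le W_q$, dividing by $\delta_q$ and applying the $\lceil x \rceil \le x + 1$ bound yields $\sum_{i \in J} w_q'(i) \le W_q/\delta_q + |J| \le W_q/\delta_q + n$, and integrality of the left side gives the $W_q'$ bound.

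There is no real obstacle here; the statement is essentially a two-line bookkeeping lemma, and the only subtlety worth flagging is ensuring the denominators $\mu_j, \delta_q$ are positive (so that the rounded quantities are defined) and remembering that integrality of the rounded sums is what lets one replace $M_j/\mu_j + n$ by $\lfloor M_j/\mu_j \rfloor + n$.
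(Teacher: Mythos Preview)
Your proof is correct and follows essentially the same approach as the paper: both bound the ceiling by adding $1$ per item and use the $+n$ slack to absorb the rounding error. The only cosmetic difference is that the paper applies $\lceil x\rceil \le \lfloor x\rfloor + 1$ term-by-term and then uses $\sum_i \lfloor x_i\rfloor \le \lfloor \sum_i x_i\rfloor$, whereas you use $\lceil x\rceil \le x + 1$ and invoke integrality of the left-hand side at the end; these are equivalent manipulations.
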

\begin{proof}
For any dimension $q\in[d]$,
$\sum_{i \in J} w_q'(i)
= \sum_{i \in J} \ceil{w_q(i)/\delta_q}
\le \sum_{i \in J} \left(\floor{w_q(i)/\delta_q}+1\right)$.
\[\sum_{i \in J} \left(\floor{w_q(i)/\delta_q}+1\right)
\le |J| + \floor{ (1/\delta_q) \sum_{i \in J} w_q(i) }
\le n + \floor{W_q/\delta_q} = W_q'\]

Let $J_j$ be the items in $J$ assigned to the $j\Th$ machine. Then
$\sum_{i \in J_j} s_j'(i)
= \sum_{i \in J_j} \ceil{s_j(i)/\mu_j}$.
\[
\sum_{i \in J_j} \left(\floor{s_j(i)/\mu_j}+1\right)
\le |J_j| + \floor{ (1/\mu_j) \sum_{i \in J_j} s_j(i) }
\le n + \floor{M_j/\mu_j} = M_j'
\qedhere\]
\end{proof}

\begin{lemma}
\label{thm:ra2}
Let $J$ be feasible for $(\vect{s'}, \vect{w'}, \vect{M'}, \vect{W'})$.
Then $J$ is also feasible for
$(\vect{s}, \vect{w}, \vect{M} + n\vect{\mu}, \vect{W} + n\vect{\delta})$.
\end{lemma}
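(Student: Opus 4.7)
The plan is to mirror the computation in \Cref{thm:ra1}, but using the complementary inequality $\mu\ceil{x/\mu}\ge x$ (instead of $\ceil{x/\mu}\le \floor{x/\mu}+1$), so that the original sizes and weights are dominated by their rounded-up versions scaled by $\mu_j$ or $\delta_q$ respectively. Since feasibility in the primed instance gives upper bounds on the rounded-up quantities, rescaling recovers bounds on the original quantities — with slack $n\mu_j$ or $n\delta_q$ coming from the $+n$ that was absorbed into $M_j'$ and $W_q'$.

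Concretely, I would fix a machine $j\in[k]$ and let $J_j\subseteq J$ be the items assigned to it. The pointwise inequality $s_j(i)\le \mu_j\ceil{s_j(i)/\mu_j}=\mu_j s_j'(i)$, summed over $i\in J_j$, gives
\[\sum_{i\in J_j} s_j(i) \;\le\; \mu_j\sum_{i\in J_j} s_j'(i) \;\le\; \mu_j M_j' \;=\; \mu_j\left(\floor{M_j/\mu_j}+n\right) \;\le\; M_j + n\mu_j,\]
where the middle inequality uses feasibility of $J$ in the primed instance and the last uses $\floor{M_j/\mu_j}\le M_j/\mu_j$. This is precisely the size constraint of the augmented instance $(\vect s,\vect M+n\vect\mu)$ at coordinate $j$.

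The global weight side is structurally identical: for each $q\in[d]$, $w_q(i)\le \delta_q w_q'(i)$ pointwise, so
\[\sum_{i\in J} w_q(i) \;\le\; \delta_q\sum_{i\in J} w_q'(i) \;\le\; \delta_q W_q' \;=\; \delta_q\left(\floor{W_q/\delta_q}+n\right) \;\le\; W_q + n\delta_q.\]

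I do not expect any real obstacle here; the lemma is a direct arithmetic dual of \Cref{thm:ra1}, and the only care needed is to start from $\mu\ceil{x/\mu}\ge x$ rather than from the floor-based bound used in the forward direction.
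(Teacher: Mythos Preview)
Your proof is correct and matches the paper's argument essentially line for line: both use $x \le \mu\ceil{x/\mu}$ to bound $s_j(i)$ by $\mu_j s_j'(i)$ (and likewise $w_q(i)$ by $\delta_q w_q'(i)$), sum, apply feasibility in the primed instance, and then use $\mu_j\floor{M_j/\mu_j}\le M_j$ to absorb the floor. The only cosmetic difference is that the paper treats the weight constraints first and the size constraints second.
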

\begin{proof} For all $q\in[d]$
\[
\sum_{i \in J} w_q(i)
\le \sum_{i \in J} \delta_q w_q'(i)
\le \delta_q W_q'
= \delta_q\left(\floor{W_q/\delta_q} + n\right)
\le W_q + n\delta_q
\]
Let $J_j$ be the items in $J$ assigned to the $j\Th$ machine.
\[
\sum_{i \in J_j} s_j(i)
\le \sum_{i \in J_j} \mu_j s_j'(i)
\le \mu_j M_j'
= \mu_j\left(\floor{M_j/\mu_j} + n\right)
\le M_j + n\mu_j
\qedhere\]
\end{proof}

Let $\mu_j = \epsilon M_j/n$ and $\delta_q = \epsilon W_q/n$ for all $q\in[d]$ and $j\in[k]$.
Let $J^*$ be the optimal solution to $(I, \vect{\val}, \vect{s}, \vect{w}, \vect{M}, \vect{W})$.
Let $\widehat{J}$ be the optimal solution to $(I, \vect{\val}, \vect{s'}, \vect{w'}, \vect{M'}, \vect{W'})$.
By \Cref{thm:ra1}, $\val(\widehat{J}) \ge \val(J^*)$.
By \Cref{thm:ra2}, $\widehat{J}$ is feasible for
$(\vect{s}, \vect{w}, (1+\epsilon)\vect{M}, (1+\epsilon)\vect{W})$. Also, observe
that $|M_j'| \le n + M_j/\mu_j = n(1 + 1/\epsilon)$ is a
polynomial in $n$. Similarly, $|W_q'| \le n(1 + 1/\epsilon)$ and hence
the optimal solution to $(I, \vect{\val}, \vect{s'}, \vect{w'}, \vect{M'}, \vect{W'})$
can be obtained using the dynamic-programming algorithm
in polynomial time.
Therefore, the optimal solution to $(I, \vect{\val}, \vect{s'}, \vect{w'}, \vect{M'}, \vect{W'})$
can be obtained using the dynamic-programming algorithm
in time ${\displaystyle \Theta\left((d+k)n^{d+k+1}/\epsilon^{d+k}\right)}$.

Let us define a subroutine $\operatorname{assign-res-aug}_{\epsilon}
(I, \vect{\val}, \vect{s}, \vect{w}, \vect{M}, \vect{W})$ which takes as input set
$I$ with associated values, $\vect{\val}$, and gives as output the optimal feasible
solution to $(\vect{s}, \vect{w}, (1+\epsilon)\vect{M}, (1+\epsilon)\vect{W})$.

\subsection{Trimming}

Consider a set $I$ of items where each item has length $s(i)$ and profit $p(i)$
(in this subsection, $I$ is an instance of the knapsack problem
instead of the \vmgap{} problem).

Suppose for all $i \in I, s(i) \in (0, \epsilon]$ and $s(I) \le 1+\delta$.
We'll show that there exists an $ R \subseteq I$ such that
$s(I-R) \le 1$ and $p(R) < (\delta + \epsilon)p(I)$.
We call this technique of removing a low-profit subset from $I$
so that it fits in a bin of length 1 trimming.

Arbitrarily order the items and arrange them linearly in a bin of size $1+\delta$.
Let $k = \floor{1/(\delta + \epsilon)}$.
Create $k+1$ intervals of length $\delta$ and $k$ intervals of length $\epsilon$.
Place $\delta$-intervals and $\epsilon$-intervals alternately.
They will fit in the bin because $(k+1)\delta + k\epsilon = \delta + (\delta + \epsilon)\floor{1/(\delta + \epsilon)} \le 1 + \delta $

Number the $\delta$-intervals from $0$ to $k$ and let $S_i$ be the set of items
intersecting the $i\Th$ $\delta$-interval. Note that, all $S_i$ are mutually disjoint.
Let $i^* = \argmin_{i=0}^k p(S_i)$.
\[ p(S_{i^*}) = \min_{i=0}^k p(S_i) \le \frac{1}{k+1} \sum_{i=0}^k p(S_i)
\le \frac{p(I)}{\floor{1/(\delta + \epsilon)} + 1}
< (\delta + \epsilon)p(I) \]

Removing $S_{i^*}$ will create an empty interval of length $\delta$
in the bin, and the remaining items can be shifted so that they fit in a bin of length 1.

\subsection{Packing Small Items}

Consider a \vmgap{} instance $(I, \vect{\val}, \vect{s}, \vect{w}, \vect{M}, \vect{W})$.
Item $i$ is said to be $\epsilon$-small for this instance iff
$\vect{w}(i) \le \epsilon \vect{W}$ and
for all $ j \in [k], (s_j(i) \le \epsilon M_j \textit{ or } \val_j(i) = 0)$.
A set $I$ of items is said to be $\epsilon$-small iff each item in $I$ is $\epsilon$-small.

Suppose $I$ is $\epsilon$-small.
Let $J \subseteq I$ be a feasible solution to
$(I, \vect{\val}, \vect{s}, \vect{w}, (1+\epsilon)\vect{M}, (1+\epsilon)\vect{W})$.
Let $J_j$ be the items assigned to the $j\Th$ machine.

For each $j \in [k]$, use trimming on $J_j$ for sizes $s_j$
and then for each $j \in [d]$, use trimming on $J$ for weights $w_j$.
In both cases, use $\epsilon:=\epsilon$ and $\delta:=\epsilon$.
Let $R$ be the removed items and $J' = J-R$ be the remaining items.
Total value lost is less than $2\epsilon(d+1)\val(J)$
and $J'$ is feasible for $(\vect{s}, \vect{w}, \vect{M}, \vect{W})$.

Therefore, any resource-augmented solution $J$ of small items can be transformed
to get a feasible solution $J'$ of value at most $(1-2(d+1)\epsilon)\val(J)$.

\subsection{A Structural Result}

\begin{theorem}
\label{thm:struct}
Let $J$ be a feasible solution to $(I, \vect{\val}, \vect{s}, \vect{w}, \vect{M}, \vect{W})$.
Let $J_j \subseteq J$ be the items assigned to the $j\Th$ machine.
Then for all $ \epsilon > 0$, there exist sets $X$ and $Y$ such that
$|X| \le (d+k)/(\epsilon^2)$ and $\val(Y) \le \epsilon\cdot \val(J)$ and
\begin{align*}
\forall j \in [k], \forall i \in J_j - X - Y, &\; s_j(i) \le \epsilon \left(M_j - s_j\left(X \cap J_j\right)\right)
\\ \forall i \in J - X - Y, &\; \vect{w}(i) \le \epsilon \left(\vect{W} - \vect{w}(X)\right)
\end{align*}
\end{theorem}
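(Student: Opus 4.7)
The plan is a linear-grouping / shifting argument applied independently to each of the $d+k$ resource constraints. I will index the resources so that for $r=q\in[d]$ the resource refers to the global weight in dimension $q$ (with size function $\sigma_r:=w_q$, capacity $\mathrm{cap}_r:=W_q$, ambient set $I_r:=J$), and for $r=d+j$ with $j\in[k]$ it refers to the size in machine $j$ ($\sigma_r:=s_j$, $\mathrm{cap}_r:=M_j$, $I_r:=J_j$); feasibility gives $\sum_{i\in I_r}\sigma_r(i)\le\mathrm{cap}_r$ in every case.

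For each resource $r$ I would sort the items of $I_r$ in decreasing order of $\sigma_r$ and split the first $1/\epsilon^{2}$ items into $1/\epsilon$ consecutive groups $G_1^{(r)},\ldots,G_{1/\epsilon}^{(r)}$ of $1/\epsilon$ items each, then pick $k_r^{*}\in[1/\epsilon]$ minimising $\val(G_{k_r^{*}}^{(r)})$; by averaging this value is at most $\epsilon\,\val(J)$. I set $X_r:=\bigcup_{k<k_r^{*}}G_k^{(r)}$ (so $|X_r|<1/\epsilon^{2}$) and $Y_r:=G_{k_r^{*}}^{(r)}$, and finally define $Y:=\bigcup_r Y_r$ and $X:=(\bigcup_r X_r)\setminus Y$. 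This immediately yields $|X|<(d+k)/\epsilon^{2}$ and $\val(Y)\le(d+k)\epsilon\,\val(J)$; the factor $(d+k)$ on the value loss is absorbed by running the whole construction with $\epsilon/(d+k)$ in place of $\epsilon$.

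For the smallness conclusion, fix a resource $r$ and any $i\in I_r-X-Y$. Since $i$ lies outside both $X_r$ and $Y_r$, its rank in the $\sigma_r$-sorted list of $I_r$ strictly exceeds $k_r^{*}/\epsilon$, so $\sigma_r(i)$ is at most the size of the smallest element of $G_{k_r^{*}}^{(r)}$, which by averaging inside the group is at most $\epsilon\,\sigma_r(G_{k_r^{*}}^{(r)})$. Specialising $r$ to a weight dimension gives the second bullet of the theorem and specialising to a machine gives the first, once one establishes $\sigma_r(G_{k_r^{*}}^{(r)})\le\mathrm{cap}_r-\sigma_r(X\cap I_r)$.

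This last inequality is the main step requiring care, and is the reason for subtracting $Y$ from the union when defining $X$: the discarded middle group must be disjoint from the \emph{combined} set $X$, not merely from $X_r$, so that feasibility $\sigma_r(I_r)\le\mathrm{cap}_r$ can be split as $\sigma_r(G_{k_r^{*}}^{(r)})+\sigma_r(X\cap I_r)\le\sigma_r(I_r)\le\mathrm{cap}_r$, which rearranges to the needed bound. Shrinking $X$ by $Y$ does not affect the $|X|$ bound and only enlarges the RHS of the smallness inequality, so the argument closes cleanly.
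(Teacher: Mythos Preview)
Your argument is correct and is a genuinely different route from the paper's.  The paper does not group resource-by-resource; instead it iterates \emph{jointly} over all $d+k$ constraints.  It sets $R_1$ to be the items that are $\epsilon$-big for \emph{some} resource with respect to the original capacities, then $R_2$ the items $\epsilon$-big for some resource with respect to the residual capacities after removing $R_1$, and so on.  Each $R_t$ has at most $(d+k)/\epsilon$ elements, the $R_t$ are disjoint, and a single averaging over the at most $1/\epsilon$ layers produces a layer $R_T$ with $\val(R_T)\le\epsilon\,\val(J)$; then $Y=R_T$ and $X=R_1\cup\cdots\cup R_{T-1}$ give $|X|\le(d+k)/\epsilon^2$ and $\val(Y)\le\epsilon\,\val(J)$ in one stroke.

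The practical difference shows up exactly at the point you wave away: when you rescale $\epsilon\mapsto\epsilon/(d+k)$ to bring the value loss down to $\epsilon\,\val(J)$, your size bound becomes $|X|<(d+k)/\bigl(\epsilon/(d+k)\bigr)^2=(d+k)^3/\epsilon^2$, not $(d+k)/\epsilon^2$.  So strictly speaking you prove a weaker version of the theorem.  For the downstream PTAS this is harmless (all that matters is $|X|=O_{\epsilon,d,k}(1)$), but the exact constant in the stated bound is not achieved by the per-resource grouping; the paper's joint peeling is what buys the sharper $|X|\le(d+k)/\epsilon^2$.  Apart from this quantitative slack your disjointness trick $X:=(\bigcup_r X_r)\setminus Y$ is the right move and the smallness inequality $\sigma_r(G_{k_r^*}^{(r)})+\sigma_r(X\cap I_r)\le\sigma_r(I_r)\le\mathrm{cap}_r$ goes through as you say.
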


\begin{proof}
Let $P_{1,j} = \{i \in J_j: s_j(i) > \epsilon M_j \}$, where $j \in [k]$,
and $Q_{1,q} = \{i \in J: w_q(i) > \epsilon W_q \}$, where $q \in [d]$.
We know that $s_j(P_{1,j}) > \epsilon M_j |P_{1,j}|$. Therefore, $|P_{1,j}| \le \frac{1}{\epsilon}$.
Also, $w_q(Q_{1,q}) > \epsilon W_q |Q_{1,q}|$. Therefore, $|W_{1,q}| \le \frac{1}{\epsilon}$.
Let $R_1 = \left( \bigcup_{j=1}^k P_{1,j} \right) \cup \left( \bigcup_{q=1}^d Q_{1,q} \right)$.
$R_1$ is therefore the set of items in $J$ that are in some sense `big'. Note that
$|R_1| \le (d+k)/\epsilon$.

If $\val(R_1) \le \epsilon \cdot\val(J)$, set $Y = R_1$ and $X = \{\}$ and we're done.
Otherwise, set
$P_{2,j} = \{i \in J_j - R_1: s_j(i) > \epsilon (M_j - s_j(R_1 \cap J_j)) \}$,
$Q_{2,q} = \{i \in J - R_1: w_q(i) > \epsilon (W_q - w_q(R_1)) \}$,
and $R_2 = \left( \bigcup_{j=1}^k P_{2,j} \right) \cup \left( \bigcup_{q=1}^d Q_{2,q} \right)$.
If $\val(R_2) \le \epsilon \cdot\val(J)$, set $Y = R_2$ and $X = R_1$ and we're done.
Otherwise, set
$P_{3,j} = \{i \in J_j - R_1 - R_2: s_j(i) > \epsilon (M_j - s_j((R_1 \cup R_2) \cap J_j)) \}$,
$Q_{3,q} = \{i \in J - R_1 - R_2: w_q(i) > \epsilon (W_q - w_q(R_1) - w_q(R_2)) \}$,
and $R_3 = \left( \bigcup_{j=1}^k P_{3,j} \right) \cup \left( \bigcup_{q=1}^d Q_{3,q} \right)$.
If $\val(R_3) \le \epsilon \cdot\val(J)$, set $Y = R_3$ and $X = R_1 \cup R_2$ and we're done.
Otherwise, similarly compute $R_4$ and check if $\val(R_4) \le \epsilon \cdot\val(J)$, and so on.
Extending the similar arguments about  $\abs{R_1}$, it follows that for all $t>0$, $|R_t| \le (d+k)/\epsilon$.

Since every $R_t$ $(t > 0)$ is disjoint, there will be some $R_T$ such that $\val(R_T) \le \epsilon\cdot\val(J)$.

Now set $Y = R_T$ and $X = R_1\cup\ldots \cup R_{T-1}$.
We can see that $|X| \le \sum_{t=1}^T |R_T| \le T(d+k)/\epsilon \le (d+k)/\epsilon^2$ and
$\val(Y) = \val(R_T) \le \epsilon\cdot \val(J)$. Note that all items in $J - X - Y$ are
small because of the way $R_T$ was constructed. Hence it follows that
\[ \forall j \in [k], \forall i \in J_j - X-Y, s_j(i) \le \epsilon (M_j - s_j(X \cap J_j)) \]
\[ \forall i \in J - X-Y, \vect{w}(i) \le \epsilon \left(\vect{W} - \vect{w}(X)\right) \qedhere\]
\end{proof}

\subsection{PTAS for \vmgap{}}

Let $J^*$ be an optimal assignment for $(I, \vect{\val}, \vect{s}, \vect{w}, \vect{M}, \vect{W})$.
Let $J_j^* \subseteq J^*$ be the items assigned to the $j\Th$ machine.

By \Cref{thm:struct}, $J^*$ can be partitioned into sets $X^*$, $Y^*$ and $Z^*$ such that
$|X^*| \le \frac{d+k}{\epsilon^2}$ and $\val(Y^*) \le \epsilon\cdot \val(J^*)$.
Let $\vect{W}^* = \vect{W} - \vect{w}(X^*)$ and $M_j^* = M_j - s_j(X^* \cap J_j^*)$.
Then $Z^*$ is $\epsilon$-small for $(\vect{\val}, \vect{s}, \vect{w}, \vect{M}^*, \vect{W}^*)$.

For a set $S$, define $\Pi_k(S)$ as the set of $k$-partitions of $S$. Then
the pseudo code in \cref{algo:cvks-ptas} provides a PTAS for the \vmgap{}
problem.

\begin{algorithm}[H]
\caption{$\vmg(I, \vect{p}, \vect{s}, \vect{w}, \vect{M}, \vect{W})$:
PTAS for Vector Max GAP}
\begin{algorithmic}
\State $J_{\textrm{best}} = \{\}$.
\For{$X \subseteq I$ such that $|X| \le (d+k)/\epsilon^2$}
    \For{$(X_1, X_2, \ldots, X_k) \in \Pi_k(X)$}
        \State $\vect{W}' = \vect{W} - \vect{w}(X)$
        \State $M_j' = M_j - s_j(X_j)$ for each $j \in [k]$.
        \State $\val_j'(i) = \begin{cases} \val_j(i) & \textrm{if } s_j(i) \le \epsilon M_j'
            \\ 0 & \textrm{otherwise} \end{cases}$ for each $i \in I-X$.
        \State \Comment{$I-X$ is $\epsilon$-small for $(\vect{\val'}, \vect{s}, \vect{w}, \vect{M}', \vect{W}')$}
        \State $Z' = \operatorname{assign-res-aug}_{\epsilon}
            (I-X, \vect{\val'}, \vect{s}, \vect{w}, \vect{M}', \vect{W}')$.
        \State Trim $Z'$ to get $Z$ so that $Z$ is feasible for $(\vect{s}, \vect{w}, \vect{M'}, \vect{W'})$.
        \State $J = X \cup Z$
        \If{$\val(J) > \val(J_{\textrm{best}})$}
            \State $J_{\textrm{best}} = J$
        \EndIf
    \EndFor
\EndFor
\State \Return $J_{\textrm{best}}$
\end{algorithmic}
\label{algo:cvks-ptas}
\end{algorithm}

\textbf{Correctness}:
Since $Z$ is feasible for $(\vect{s}, \vect{w}, \vect{M}', \vect{W}')$,
$X \cup Z$ is feasible for $(\vect{s}, \vect{w}, \vect{M}, \vect{W})$.

\textbf{Approximation guarantee}:\\
For some iteration of \Cref{algo:cvks-ptas},
$X = X^*$ and $X_j = X^* \cap J_j^*$.
When that happens, $\vect{W}' = \vect{W}^*$ and $\vect{M}' = \vect{M}^*$.
Let $\val^*$ be the maximum value $\epsilon$-small assignment of items to the
machines with capacities given by $\vect{M}'$ and over all weight constraints $\vect{W}'$.
Therefore, $\val^* \ge \val(Z^*)$.

To try to find an optimal assignment of small items,
we'll forbid non-small items to be assigned to a machine.
To do this, if for item $i$, $s_j(i) > \epsilon M_j'$, set $\val_j(i)$ to 0.
Using our resource-augmented \vmgap{} algorithm,
we get $\val(Z') \ge \val^*$.
By the property of trimming, $\val(Z) \ge (1-2(d+1)\epsilon)\val(Z')$.
\[ \val(J_{\textrm{best}})
\ge \val(X^*) + \val(Z)
\ge \val(X^*) + (1-2(d+1)\epsilon)\val(Z^*)
\ge (1-2(d+1)\epsilon)(1-\epsilon)\val(J^*)
\]
This gives us a $(1-(2d+3)\epsilon)$-approx solution. The running time can be
easily seen to be polynomial as $\operatorname{assign-res-aug}$ runs in
polynomial time and the number of iterations of the outer loop in \Cref{algo:cvks-ptas}
is polynomial in $n$ and for one iteration of the outer loop, the inner loop runs
at most constant number of times.

\section{Algorithm for \geomvec{2}{\texorpdfstring{$d$}{d}} Knapsack Problem}
\label{sec:mixknap}

In this section, we will obtain a $(2+\epsilon)$-approximation algorithm for the
\geomvec{2}{$d$} knapsack problem both with and without rotations. First, we
show the algorithm for the case without rotations. An algorithm for the case
rotations are allowed is similar except for some small changes.

Let $I$ be a set of $n$ \geomvecdim{2}{$d$} items.
We are given a \geomvecdim{2}{$d$} knapsack
and we would like to pack a high profit subset of $I$ in the knapsack.
Let us denote this optimal profit by $\optgvks(I)$. Let
$\lengthinp=[\itemlength{1},\dots,\itemlength{n}]$,
$\heightinp=[\itemheight{1},\dots,\itemheight{n}]$,
$\profitinp=[\itemprofit{1},\dots,\itemprofit{n}]$. For an item $i\in I$, let
$\weightvec{i} = [\itemweight{1}{i}, \dots, \itemweight{d}{i}]$ and
let $\weightinp = [\weightvec{1}, \dots, \weightvec{n}]$.

In the whole section, a \emph{container} is a rectangular region inside the
knapsack. For our purposes, every container can be one of the four types:
\emph{large, wide, tall, area}. A large container can contain at most one item. An area
container can only contain items that are $\epsilon$-small for the container i.e.
an item can be packed into an area container of width $w$ and height $h$
only if the item has width at most $\epsilon w$ and height at most $\epsilon h$.
In a wide (resp. tall) container, items must be stacked up one on top
of another (resp. arranged side by side). We also require that the containers do
not overlap amongst themselves and no item partially overlaps with any of
the containers.

We also use the notation $\bigOhConst{\epsilon}$ in place of $\bigOh{f(\epsilon)}$,
where $\epsilon$ is an arbitrary constant and $f(\cdot)$ is a function which solely
depends on the value of $\epsilon$, when we do not explicitly define what $f(\cdot)$ is.
Similarly, $\bigOhConst{\epsilon_1, \epsilon_2}$ represents
$\bigOh{f(\epsilon_1,\epsilon_2)}$ and so on.

\subsection{A Structural Result}
\label{geomvecks-struct-intro}
Consider a set of items $S$ that are packed in a knapsack.
We now state a structural result, inspired by \cite{l-packing},
where only a subset of items $S'\subseteq S$
is packed into the knapsack. We may lose some profit but the packing has a nice
structure which can be searched for, efficiently.

\begin{theorem}
\label{structresult}
Let $S$ denote a set of items that can be feasibly packed into a knapsack
and let $0<\epsilon<1$ be any small constant.
Then there exists a subset $S' \subseteq S$ such that
$\itemprofit{S'} \ge (1/2-\epsilon)\cdot \itemprofit{S}$.
The items in $S'$ are packed into the knapsack in containers.
Further, the number of containers formed is $\bigOhConst{\epsilon}$
and their widths and heights belong
to a set whose cardinality is $\poly(|S|)$ and moreover, this set can
be computed in time $\poly(|S|)$.
\end{theorem}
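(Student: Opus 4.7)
The plan is to follow the corridor-decomposition blueprint introduced by Adamaszek--Wiese and refined by G\'alvez et al.\ in \cite{l-packing}, adapted so that only the \emph{geometric} structure needs to be rearranged (the vector constraints are preserved automatically since the subset we keep is a subset of a feasible packing). First, classify items in $S$ by their geometric shape with respect to two small thresholds $\epsilon_{\ver},\epsilon_{\hor}$: an item is \emph{large} if both sides exceed $\epsilon_{\hor}$, \emph{tall} if only its height exceeds $\epsilon_{\ver}$ (and its width is small), \emph{wide} if only its width exceeds $\epsilon_{\ver}$, \emph{medium} if it falls in a middle band, and \emph{small} otherwise. A standard shifting argument lets us pick $\epsilon_{\hor}\gg\epsilon_{\ver}$ so that the medium items carry at most $\epsilon\cdot p(S)$ profit and can be discarded; there are at most $O_\epsilon(1)$ large items, so they are trivially put into their own large containers.

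Next, apply the corridor-decomposition theorem to the remaining packing: after discarding another $O(\epsilon)\cdot p(S)$ profit worth of items, the non-large items sit inside a collection of $O_\epsilon(1)$ axis-parallel \emph{corridors}, each of which is a rectilinear polygon of bounded complexity whose short dimension is at most a small constant. Each tall/wide item lies in one corridor, and small items can be retreated to the fringes. The crucial halving step comes now: in every corridor orient the corridor as either ``horizontal'' or ``vertical''; an item is \emph{co-oriented} with its corridor if it is wide in a horizontal corridor or tall in a vertical corridor, and \emph{cross-oriented} otherwise. Within each corridor one of these two classes carries at least half the profit of the skewed items there, so taking the better half \emph{globally} keeps at least $(1/2)\cdot p(S) - O(\epsilon)p(S)$ profit. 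This is where the factor $1/2$ in the statement comes from, and I expect this to be the main conceptual obstacle --- we must be careful that the half-loss is taken over all corridors simultaneously in a way that is consistent with the subsequent container construction (the idea is to choose, for each corridor, the more profitable orientation class independently, since items from different corridors do not interact geometrically).

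Once only co-oriented skewed items remain, each corridor degenerates into a stack of wide items or a row of tall items, possibly bent at the corridor corners. Each bend or branch introduces only $O_\epsilon(1)$ pieces, so we can cut each corridor into $O_\epsilon(1)$ straight \emph{sub-corridors}; each sub-corridor is then replaced by one \emph{wide} container (for wide items, stacking them bottom-up by NFDH-style alignment) or one \emph{tall} container (for tall items, placing them side by side). The small items are packed into a bounded number of \emph{area} containers carved out of the slack space left after the skewed items have been snapped into their containers; a $(1+O(\epsilon))$-type volume argument combined with NFDH (cf.\ \cref{lem:nfdh-small}) ensures the small items fit with negligible further loss. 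The total number of containers is $O_\epsilon(1)$, as required.

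Finally, to force container widths and heights into a set of polynomial cardinality, note that after the construction every container boundary is aligned with the edge of some item of $S$ (a wide/tall container inherits its short side from the widest/tallest item inside it and its long side from a corridor boundary, which is itself an item edge; large containers equal single items; area containers can be expanded leftward/downward until they hit other items or the knapsack wall). Hence each container dimension belongs to the set of sums of $O_\epsilon(1)$ values drawn from $\{w(i),h(i):i\in S\}\cup\{0,1\}$, and restricting to the $O_\epsilon(1)$-many configurations that can actually arise yields a $\poly(|S|)$-sized candidate set computable in $\poly(|S|)$ time. Combining all losses ($O(\epsilon)p(S)$ for medium items, corridor discarding, trimming, and the halving) and rescaling $\epsilon$ gives the claimed bound $p(S')\ge (1/2-\epsilon)p(S)$.
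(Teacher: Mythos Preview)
Your outline follows the right blueprint but has a genuine gap in how the constant-sized high-profit sets are handled. You say the $O_\epsilon(1)$ large items are ``trivially put into their own large containers'' and that the corridor lemma discards only $O(\epsilon)p(S)$ profit. Neither holds: the Corridor Packing Lemma also produces a set $S_{\cross}^{\nice}$ of $O_\epsilon(1)$ items that lie in no corridor, and together with the large items and the items straddling the strips removed in \cref{strip-removal}, these may carry almost all of $p(S)$. Nor can you simply keep the large items as standalone containers while running the corridor decomposition on the rest, because the corridors computed for the skewed items may overlap the large items' positions. The paper closes this gap with an iterative \emph{shifting argumentation}: the offending $O_\epsilon(1)$ items are frozen in place, a non-uniform grid is drawn through their corners, all remaining items are re-classified (small, big, wide, tall, medium) relative to the grid cells, and the entire pipeline (medium removal, corridor lemma, strip removal, resource augmentation) is rerun inside this grid; one iterates, accumulating the frozen items in $\mathcal K(t)$, until the newly discarded constant-sized set $K(t)$ has profit $\le\epsilon\,p(S)$, which must happen within $\lceil 1/\epsilon\rceil$ rounds since the $K(t)$ are disjoint. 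This is the most delicate piece of the proof and is absent from your proposal.

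Two smaller points where your sketch diverges from what actually works. First, the halving in the paper is \emph{global}: remove all tall items or all wide items, whichever class has smaller total profit; a single corridor alternates between wide and tall sub-corridors, so there is no meaningful per-corridor ``orientation'', and after the global removal the surviving sub-corridors are already pairwise disjoint rectangles (boxes), not bent shapes. Second, a box cannot be replaced by \emph{one} wide container via stacking: the wide items inside have overlapping $x$-extents, so the sum of their heights can far exceed the box height. The paper first applies \cref{strip-removal} to create vertical slack and then the Resource Augmentation Lemma (\cref{ra-lemma}) to turn each box into $O_\epsilon(1)$ containers with $O(\epsilon)$ further profit loss.
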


Now, let us go back to our original problem instance $I$. Let $I_{\OPT}$
be the set of items packed into the knapsack in an optimal packing $\mathcal{P}$.

Let us apply \Cref{structresult} to the set of items $I_{\OPT}$ with
$\epsilon := \epsilon_{\struct}$ ($\epsilon_{\struct}$ will be defined later).
Let $I_{\OPT}'$ be the resulting analog of $S'$ in the theorem
(there can be many candidates for $I_{\OPT}'$ but let us pick one). Therefore,
\begin{align}
\itemprofit{I_{\OPT}'}\ge (\frac{1}{2}-\epsilon_{\struct})\cdot \itemprofit{I_{\OPT}} = (\frac{1}{2}-\epsilon_{\struct})\cdot \optgvks{(I)} \label{halfprofit}
\end{align}

\subsection{Proof of the Structural Result}
\label{geomvecks-struct}

In this subsection, we will prove \Cref{structresult}.

The strategy to prove the theorem is to use the corridor decomposition scheme,
essentially taken from \cite{adamaszek2015knapsack}. First, we assume
that we can remove $\bigOhConst{1/\epsilon}$ number of items at the cost
of zero profit loss from the originally packed items. Under this assumption, we show
that we lose at most half profit by our restructuring. Finally, we show how to get
rid of this assumption by using shifting argumentations.

\textbf{Removing Medium items}: Let $\epssmall, \epslarge$ be two fixed constants such that $\epslarge > \epssmall$.
We partition the items in $S$ based on the values of $\epssmall$ and $\epslarge$ as
follows:
\begin{itemize}
    \item $S_S = \left\{i\in S\::\:\itemlength{i}\le \epssmall \andtext \itemheight{i}\le \epssmall\right\}$
    \item $S_B = \left\{i\in S\::\:\itemlength{i}> \epslarge \andtext \itemheight{i}> \epslarge\right\}$
    \item $S_W = \left\{i\in S\::\:\itemlength{i}> \epslarge \andtext \itemheight{i}\le \epssmall\right\}$
    \item $S_T = \left\{i\in S\::\:\itemlength{i}\le \epssmall \andtext \itemheight{i}> \epslarge\right\}$
    \item $S_{\medium} = \left\{i\in S\::\:\itemlength{i}> \epssmall \andtext \itemlength{i}\le \epslarge
                                 \quad \textup{OR}\quad \itemheight{i}\le \epslarge \andtext \itemheight{i}> \epssmall\right\}$
\end{itemize}

We call the items in $S_S$ as \emph{small} items as they are small in both the
dimensions. Similarly, we call the items in $S_B, S_W, S_T, S_{\medium}$ as
\emph{big, wide, tall, medium} respectively.

By standard arguments, it is possible to choose the constants $\epssmall$ and
$\epslarge$ such that the total profit of all the medium items is at most
$\epsilon\cdot \itemprofit{S}$. Hence, we can discard the items in $S_{\medium}$ from $S$
while losing a very small profit. We omit the proof as it is present in many articles
on packing (see, for example, \cite{l-packing}).

\subsubsection{Corridors}

First, let us define what a subcorridor is. A subcorridor is just a rectangle in
the 2D coordinate system with one side longer than $\epslarge$ and the other side having a length
of at most $\epslarge$. A subcorridor is called wide (resp. tall) if the longer side is
parallel to the \xaxis{} (resp. \yaxis{}).

A corridor is just a subcorridor or a union of at most $1/\epsilon$
subcorridors such that each wide (resp. tall)
subcorridor overlaps with exactly two tall (resp. wide) subcorridors, except for
at most two subcorridors which are called the \emph{ends} of the corridors and can overlap
with exactly one subcorridor. The overlap between any two subcorridors should be
in such a way that one of their corners coincide.

\subsubsection{Corridor Decomposition}

For now, let's consider a generic packing of a set of items $S$ that can
contain big, small, wide, tall items.

Since the number of big items packed can be at most a constant, and since we
assumed that we can remove a constant number of items at no profit loss, let us
remove all the big items from the packing. Let's also get rid of the
small items for now (we will pack the small items later). Hence, we are left with
wide and tall items packed in the knapsack. Let's name these items as
\emph{skewed} items and denote the set of these items by $S_\skewed$.

\begin{lemma}[Corridor Packing lemma \cite{adamaszek2015knapsack}]
\label{corr-packing}
There exist non-overlapping corridor regions in the knapsack such that
we can partition $S_\skewed$ into sets $S_\corr$,
$S_\cross^\nice$, $S_\cross^\bad$ such that
\begin{itemize}
    \item $\abs{S_\cross^\nice} = \bigOhConst{1/\epsilon}$
    \item $\itemprofit{S_\cross^\bad} \le \bigOhConst{\epsilon}\cdot\itemprofit{S_\skewed}$
    \item Every item in $S_\corr$ is fully contained in one of the corridors.
    The number of corridors is $\bigOhConst{1/\epsilon,1/\epsilon_{\mammoth}}$ and in each
    corridor, the number of subcorridors is at most $1/\epsilon$.
    \item Each subcorridor has length at least $\epsilon_{\mammoth}$ and
    breadth less than $\epsilon_{\mammoth}$ (assuming length to denote the longest side
    and breadth to denote the smallest side).
\end{itemize}
\end{lemma}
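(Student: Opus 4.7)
The plan is to invoke the corridor decomposition of Adamaszek and Wiese \cite{adamaszek2015knapsack}, so the main content is reproducing their construction in the present notation. Starting from the packing of $S_\skewed$, I would first build a line partition of the knapsack: for each wide item extend its top and bottom horizontal edges leftward and rightward until they hit another skewed item or the knapsack boundary; for each tall item, symmetrically extend its left and right vertical edges upward and downward. Extensions are permitted to pass through skewed items of the perpendicular orientation. This produces a partition of the knapsack into axis-parallel rectangular cells, and every cell containing a piece of a skewed item inherits a short side of length at most $\epssmall < \epslarge$ from that item's short dimension.

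Next I would aggregate cells into corridors. Adjacent cells that share an extended edge and are occupied by skewed items of opposite orientation meet at an L-shaped junction; walking greedily along such junctions yields maximal corridors consisting of alternating wide and tall subcorridors joined at common corners, matching the definition above. Each subcorridor contains at least one skewed item and therefore has long side $\ge \epslarge$. A counting argument on the extended-edge arrangement bounds the number of corridors by $\bigOhConst{1/\epsilon, 1/\epslarge}$. Any corridor whose chain of subcorridors exceeds $1/\epsilon$ in length is severed by removing the $\bigOh{1}$ skewed items at the cut; these items are absorbed into $S_\cross^\nice$, which keeps $|S_\cross^\nice| = \bigOhConst{1/\epsilon}$.

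The main obstacle is handling items that straddle subcorridor boundaries, since such items are neither contained in a single subcorridor nor easily removed without profit loss. The classical remedy is a shifting argument: for each corridor, consider $\Theta(1/\epsilon)$ candidate parallel translates of its internal boundaries, distribute the straddling items among these translates, and conclude by averaging that at least one translate collects at most $\bigOhConst{\epsilon}\cdot \itemprofit{S_\skewed}$ profit. Declaring this low-profit class to be $S_\cross^\bad$ and re-drawing the subcorridor boundaries so as to avoid the remaining crossings produces $S_\corr$, whose items each lie fully inside a single subcorridor, and simultaneously ensures that every subcorridor has short side $< \epslarge$. The vector constraints are irrelevant to this purely geometric argument, so the construction is oblivious to the $d$-dimensional weights; I would present the proof by transporting the construction of \cite{adamaszek2015knapsack} into the paper's notation, verifying that the quantitative bounds on $|S_\cross^\nice|$, $\itemprofit{S_\cross^\bad}$, the corridor count, and the length/breadth of subcorridors transfer verbatim.
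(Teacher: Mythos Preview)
The paper does not prove this lemma at all: it is stated with the citation \cite{adamaszek2015knapsack} in the header and used as a black box, with no accompanying argument. Your proposal therefore goes well beyond what the paper does, by sketching the construction from the cited reference. The high-level outline you give---extending item edges to form a cell partition, aggregating cells into corridors, severing over-long corridors and charging the cut items to $S_\cross^\nice$, and a shifting argument to select a low-profit crossing class $S_\cross^\bad$---is broadly faithful to the spirit of the Adamaszek--Wiese construction, though the actual proof in \cite{adamaszek2015knapsack} is substantially more technical (in particular, the aggregation into corridors and the bound on the number of corridors require a more careful argument than a simple walk along L-junctions). For the purposes of this paper, however, no proof is expected; a citation suffices.
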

\begin{remark}
\label{subcorr-total}
The total number of subcorridors after the corridor partition is
$\bigOhConst{1/\epsilon,1/\epsilon_{\mammoth}}$.
\end{remark}
We also remove items in $S_\cross^\nice$ since they are at most a constant in
number and items in $S_\cross^\bad$ since their total profit is very less.

The last point of the lemma ensures that a skewed item is completely contained
in at most one subcorridor. Hence for every skewed item contained in a corridor,
we can tell which subcorridor it \emph{belongs} to. The last point also ensures that,
there can not be a subcorridor which completely contains both wide and tall
items. This fact allows us to label each subcorridor as wide or tall: If a
subcorridor contains only wide (resp. tall) items, we say that it is a
wide (resp. tall) subcorridor.

\textbf{Removing either wide or tall items}: Now, we will simplify the above
structure of skewed items while losing at most half of the original profit.

Assume \wLoG{} that the total profit of wide items is at least as much as the total
profit of the tall items. Hence, if we get rid of the tall items, we lose
at most half of the original profit. With this step, we can also remove all the
tall subcorridors since they are all empty. We are left with wide items
packed in wide subcorridors. Since the subcorridors are just rectangles and
they no longer overlap, we just call these subcorridors as \emph{boxes}.

Next, we will describe how to reorganize the items in these boxes into containers
at a very marginal profit loss.

\subsubsection{Reorganizing Boxes into Containers}

\label{box-to-cont}
Note that the boxes contain only wide items. And since the width of
the wide items is lower bounded by $\epsilon_{\mammoth}$, we can create an empty
wide strip in each box at the loss of very small profit and at most a constant
number of items. This is described in the following lemma.

\begin{lemma}
\label{strip-removal}
Let $B$ be a box of dimensions $a\times b$ such that each item contained in it
has a width of at least $\mu$, where $\mu$ is a constant. Let the total profit
packed inside the item be $P$ and let $\epsilon_\boxrect$ be a
small constant. Then, it is possible to pack all the items barring a set of
items of profit at most $\epsilon_\boxrect\cdot P$ and a set of
$\bigOhConst{1/\mu}$ number of items into a box of dimensions $a\times(1-\epsilon_\boxrect)b$.
\end{lemma}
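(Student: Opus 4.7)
\textbf{Proof plan for \Cref{strip-removal}.} The plan is a standard pigeonhole-plus-shifting argument: I will carve out an empty horizontal band of height $\epsilon_\boxrect b$ from $B$, and then slide the surviving items above the band downward so that the remaining packing fits into a box of height $(1-\epsilon_\boxrect)b$.

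First, I partition the vertical range $[0,b]$ of $B$ into $k := \lfloor 1/\epsilon_\boxrect \rfloor$ disjoint horizontal strips $S_1,\ldots,S_k$ of height exactly $\epsilon_\boxrect b$ each, stacked from the bottom of $B$ upward (there may be a leftover piece of height less than $\epsilon_\boxrect b$ at the top, which I ignore). For each $j$, let $E_j$ be the set of items whose bounding rectangles lie entirely inside $S_j$, and let $C_j$ be the set of items crossing either horizontal boundary line of $S_j$. The key structural observation is that since every item has width at least $\mu$ and items are packed axis-parallel and disjointly inside a box of width $a$, the $x$-projections of items that cross a fixed horizontal line are pairwise disjoint intervals of length $\ge\mu$ inside $[0,a]$; hence at most $\lfloor a/\mu\rfloor$ items can cross any given horizontal line, and $|C_j|\le 2a/\mu = \bigOhConst{1/\mu}$ for every $j$.

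Since the sets $E_1,\ldots,E_k$ are pairwise disjoint subsets of the items in $B$, we have $\sum_{j=1}^k \itemprofit{E_j} \le P$, and by pigeonhole some $j^*\in[k]$ satisfies $\itemprofit{E_{j^*}}\le P/k \le \epsilon_\boxrect P$ (up to the factor $1/(1-\epsilon_\boxrect)$ coming from the floor, which can be absorbed into a routine rescaling of $\epsilon_\boxrect$). I then delete $E_{j^*}\cup C_{j^*}$ from the packing: this discards at most $\epsilon_\boxrect P$ profit (from $E_{j^*}$) together with at most $2a/\mu = \bigOhConst{1/\mu}$ items (from $C_{j^*}$). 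After this removal the strip $S_{j^*}$ is entirely free of items, so translating every surviving item whose bottom lies at or above the top of $S_{j^*}$ downward by $\epsilon_\boxrect b$ preserves axis-alignment and non-overlap (items below $S_{j^*}$ do not move, and the gap of height $\epsilon_\boxrect b$ absorbs the translation), producing a valid packing inside a box of dimensions $a\times(1-\epsilon_\boxrect)b$.

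The only real obstacle is bookkeeping: verifying the crossing bound (which is immediate from the width lower bound $\mu$ and non-overlap), and handling the mismatch between $1/\epsilon_\boxrect$ and $\lfloor 1/\epsilon_\boxrect \rfloor$, which one absorbs by applying the argument with $\epsilon_\boxrect/2$ in place of $\epsilon_\boxrect$. Since the box sits inside the unit knapsack we have $a\le 1$, so the number of removed crossing items is $\bigOhConst{1/\mu}$ as required by the statement.
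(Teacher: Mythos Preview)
Your proof is correct and follows essentially the same approach as the paper: partition $B$ into roughly $1/\epsilon_\boxrect$ horizontal strips of height $\epsilon_\boxrect b$, apply pigeonhole to find a strip whose fully contained items carry at most $\epsilon_\boxrect P$ profit, bound the number of boundary-crossing items by $O(1/\mu)$ using the width lower bound, delete both sets, and shift. The only cosmetic difference is that the paper uses $\lceil 1/\epsilon_\boxrect\rceil$ regions (so the pigeonhole bound $P/\lceil 1/\epsilon_\boxrect\rceil\le \epsilon_\boxrect P$ is exact without rescaling), whereas you use $\lfloor 1/\epsilon_\boxrect\rfloor$ strips and absorb the slack by halving $\epsilon_\boxrect$.
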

\begin{proof}
Assume the height of the box is along the \yaxis{} and width of the box is along
the \xaxis{} and say, the bottom left corner of the box is situated at the origin. Draw lines
at $y=y_i$ for $i\in \left\{1,\dots,\floor{1/\epsilon_\boxrect}\right\}$
such that $y_i=i\cdot\epsilon_\boxrect b$.
These lines will partition the box into $\ceil{1/\epsilon_\boxrect}$ regions. For all
$i\in\left\{1,\dots,\ceil{1/\epsilon_\boxrect}\right\}$, let $s_i$ be
the set of items completely contained in the $i\Th$ region. Then, there
must exist some region $j$ such that $\itemprofit{s_j}\le (1/\ceil{1/\epsilon_\boxrect})\cdot P$.
We will remove all the items in $s_j$. Also, the number of items
partially overlapping with the region $j$ can be at most $2/\mu$, which is
$\bigOhConst{1/\mu}$. By removing these items,
we create an empty strip of size $a\times \epsilon_\boxrect b$
and hence the remaining items can be
packed in a box of size $a\times(1-\epsilon_\boxrect)b$.
\end{proof}

We apply the above lemma to each and every box with small enough
$\epsilon_\boxrect$ and
$\mu=\epsilon_{\mammoth}$. Consider a box $B$ and let
$S_{\boxrect}$ be the items packed inside $B$.
Also, let $a\times b$ be the dimensions of the box.
Let $S_{\boxrect}'$ be the set of items
in the knapsack after performing the steps in \Cref{strip-removal}. $S_{\boxrect}'$ can
be packed within the box $a\times (1-\epsilon_\boxrect)b$ and we are left with some
empty space in the vertical direction. Hence, we can apply the resource
augmentation lemma, which is taken from \cite{l-packing}.

\begin{lemma}[Resource Augmentation Lemma]
\label{ra-lemma}
Let $I$ be a set of items packed inside a box of dimensions $a\times b$ and let
$\epsilon_\ra$ be a given constant. Then there exists a container packing of
a set $I'\subseteq I$ inside a box of size $a\times (1+\epsilon_\ra)b$ such that
\begin{itemize}
    \item $\itemprofit{I'}\ge(1-\bigOhConst{\eps_\ra})\cdot \itemprofit{I}$
    \item The number of containers is $\bigOhConst{1/\epsilon_\ra}$ and the
    dimensions of the containers belong to a set of cardinality
    $\mathcal{O}\left(\abs{I}^{\bigOhConst{1/\epsilon_\ra}}\right)$.
    \item The total area of the containers is at most $\areasym(I)+\epsilon_\ra\cdot ab$
\end{itemize}
\end{lemma}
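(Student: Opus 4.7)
The plan is to follow the now-standard container-packing construction from the 2-D geometric knapsack literature, as attributed here to \cite{l-packing}. First I would classify the items of $I$ by size relative to the box $a\times b$ into \emph{large}, \emph{wide}, \emph{tall}, and \emph{small} categories, separated by a thin band of \emph{medium} items. Picking the small-item threshold $\delta$ from a geometric sequence of $O(1/\eps_\ra)$ candidate scales, an averaging/shifting argument yields a $\delta$ for which the medium items contribute at most $O(\eps_\ra)\cdot \itemprofit{I}$ in profit; those items are discarded into the profit-loss budget.

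Next I handle each surviving class. The large items are at most $O(1/\eps_\ra^2)$ in number, so each receives its own container with dimensions equal to its own width and height. Small items are packed by NFDH into a constant number of area containers inside a thin horizontal strip, using the resource augmentation to absorb NFDH's waste. The bulk of the work is on wide items, with tall items treated symmetrically: I further split them into $O(1/\eps_\ra)$ height classes by geometric bucketing, and within each class I perform \emph{linear grouping} --- sort by height, partition into $O(1/\eps_\ra)$ groups of equal profit, round every item's height up to the tallest height in its group, and discard the topmost group as a profit loss. Each resulting rounded group is stacked inside a single wide container of width $a$ and height equal to the sum of rounded heights; the extra $\eps_\ra b$ of vertical slack from the augmented box is exactly what accommodates the rounding overhead and the stacking of all containers.

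The main obstacle is verifying the three quantitative bullets simultaneously. For the profit bound I must combine losses from medium items, from the topmost linear-grouping group in each height class and width class, and from the small-item trimming into one $1-O(\eps_\ra)$ factor. For the container count I would argue that large, wide, tall, and area containers collectively number $O_{\eps_\ra}(1)$. For the dimension set, I would observe that each container's width or height is either $a$, $b$, one of the $O(|I|)$ item side-lengths, or a sum of at most $|I|$ item side-lengths, hence lies in a set of size $|I|^{O_{\eps_\ra}(1)}$ that is computable in polynomial time. Finally, the area bound $\areasym(I)+\eps_\ra\cdot ab$ follows because in each class the rounded-up height sum exceeds the original height sum by at most the height of the discarded group, and across all classes these discarded heights total at most $\eps_\ra b$; symmetrically for tall containers, while large and area containers introduce only a negligible slack charged to the extra area. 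This accounting, rather than any single step, is the delicate part of the proof.
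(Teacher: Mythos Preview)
The paper does not prove this lemma at all; it is quoted from \cite{l-packing} and used as a black box (see the sentence immediately preceding the lemma: ``Hence, we can apply the resource augmentation lemma, which is taken from \cite{l-packing}''). So there is no proof in the paper to compare against. Your sketch is a reasonable outline of the standard argument in that literature: classify items relative to the box, discard a medium band by shifting, give each large item its own container, linear-group the skewed items into a constant number of shelves, and NFDH the small items into area containers using the extra $\eps_\ra b$ of height.

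One point to tighten if you flesh this out: in the linear-grouping step you should partition the wide items into groups of equal \emph{total width} (after sorting by height), not equal profit. Equal-width groups are what make the rounded height-sums telescope against the original ones, which is exactly what delivers both the fit into height $(1+\eps_\ra)b$ and the third bullet $\areasym(I)+\eps_\ra\cdot ab$. The profit loss from the discarded group is then handled by averaging over the $O(1/\eps_\ra)$ possible groups to drop, not by arranging groups to have equal profit a priori. With that correction your plan matches the construction in \cite{l-packing} and the three bullets follow as you describe.
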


Applying the above lemma to the box $B$, we obtain a packing where $S_{\boxrect}'$ is
packed into box of size $a\times (1-\epsilon_\boxrect)(1+\epsilon_\ra)b$.
We will choose $\epsilon_\boxrect,\epsilon_\ra$ such that the product
$(1-\epsilon_\boxrect)(1+\epsilon_\ra) < (1-2\epsilon)$.

\begin{lemma}
\label{cont-area-bound}
The total area of containers obtained is at most
\[
    \min\left\{(1-2\epsilon),\areasym(S_\corr)+\epsilon_\ra\right\}
\]
\end{lemma}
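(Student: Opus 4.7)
The plan is to combine the two area estimates provided by the strip-removal lemma (Lemma \ref{strip-removal}) and the resource augmentation lemma (Lemma \ref{ra-lemma}) applied box-by-box, and then sum over all the boxes.

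Fix an arbitrary box $B$ of dimensions $a_B \times b_B$ obtained from the wide subcorridors, and let $S_B \subseteq S_\corr$ be the items originally packed inside $B$. Applying Lemma \ref{strip-removal} to $B$ with $\mu = \epsilon_{\mammoth}$ yields a subset $S_B' \subseteq S_B$ packed inside a box of size $a_B \times (1-\epsilon_\boxrect) b_B$; in particular $\areasym(S_B') \le \areasym(S_B)$. Invoking Lemma \ref{ra-lemma} on this shrunken box with parameter $\epsilon_\ra$ produces a container packing inside a box of size $a_B \times (1-\epsilon_\boxrect)(1+\epsilon_\ra) b_B$ whose containers have total area at most $\areasym(S_B') + \epsilon_\ra \cdot a_B(1-\epsilon_\boxrect) b_B \le \areasym(S_B) + \epsilon_\ra \cdot a_B b_B$.

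For the first bound, I would simply note that the containers in box $B$ lie inside a rectangle of area $a_B (1-\epsilon_\boxrect)(1+\epsilon_\ra) b_B$, and we chose the parameters so that $(1-\epsilon_\boxrect)(1+\epsilon_\ra) < 1-2\epsilon$. The boxes are pairwise non-overlapping rectangles inside the unit knapsack, so $\sum_B a_B b_B \le 1$, giving total container area at most $(1-2\epsilon)\sum_B a_B b_B \le 1-2\epsilon$. For the second bound, summing the per-box estimate from Lemma \ref{ra-lemma} over all boxes yields total container area at most $\sum_B \areasym(S_B) + \epsilon_\ra \sum_B a_B b_B$; since the sets $\{S_B\}$ are disjoint subsets of $S_\corr$, the first sum is at most $\areasym(S_\corr)$, and the second is again at most $1$, yielding $\areasym(S_\corr) + \epsilon_\ra$.

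Taking the minimum of the two upper bounds gives the statement. There is no real obstacle here: the proof is purely additive bookkeeping. The only small point to watch is to apply the resource augmentation lemma to the \emph{shrunken} box of dimensions $a_B \times (1-\epsilon_\boxrect) b_B$ (into which the strip-removal step already fits $S_B'$), and to use $(1-\epsilon_\boxrect) \le 1$ when converting the additive error $\epsilon_\ra \cdot a_B(1-\epsilon_\boxrect) b_B$ into the simpler $\epsilon_\ra \cdot a_B b_B$.
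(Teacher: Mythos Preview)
Your proposal is correct and follows essentially the same approach as the paper: both proofs sum the per-box area bound from Lemma~\ref{ra-lemma} to obtain $\areasym(S_\corr)+\epsilon_\ra$, and both use the fact that each box is shrunk by a factor $(1-\epsilon_\boxrect)(1+\epsilon_\ra) < 1-2\epsilon$ in one dimension together with $\sum_B a_B b_B \le 1$ to obtain the $(1-2\epsilon)$ bound. Your write-up is in fact slightly more explicit than the paper's about applying Lemma~\ref{ra-lemma} to the shrunken box and then relaxing $(1-\epsilon_\boxrect)\le 1$, but the argument is identical in substance.
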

\begin{proof}
The second upper bound directly follows from the last point of \Cref{ra-lemma}:
Area of the containers in a box $B$ of dimensions $a\times b$ is at most
$\areasym(S_{\boxrect})+\epsilon_\ra\cdot ab$. Summing over all boxes we get the bound
$\areasym(S_\corr)+\epsilon_\ra$.

For the first upper bound, observe that every box of dimensions $a\times b$ is
shrunk into a box of dimensions $a\times (1-2\epsilon)b$. Since all the
original boxes were packable into the knapsack, the total area of the boxes
(and hence the containers) after shrinking is at most $(1-2\epsilon)$ times the
area of the knapsack itself, which is 1.
\end{proof}

To this end, we have completed packing all the items into containers
(except the constant number of discarded items and the small items). We will
analyze the number of containers created. A box was partitioned into
$\bigOhConst{1/\epsilon_\ra}$ containers as given by \cref{ra-lemma} and
boxes are nothing but subcorridors whose number is
$\bigOhConst{1/\epsilon,1/\epsilon_{\mammoth}}$ by \cref{subcorr-total}. Hence
the total number of containers formed
is $\bigOhConst{1/\epsilon,1/\epsilon_{\mammoth},1/\eps_\ra}$. This fact is
going to be crucial in order to pack the small items.

\subsubsection{Packing Small Items}
\label{pack-small}

In this subsection, we show how to pack small items that have been removed from the
packing temporarily. Let the set of small items be denoted by $S_\mini$.
Let $\epsilon_{\grid}=1/\ceil{\epsilon/\epsilon_{\mini}}$.

Let us define a uniform grid $\mathcal{G}$ in the knapsack where grid lines are equally
spaced at a distance of $\epsilon_\grid$. It is easy to see that the number of
grid cells formed is at most $1/\epsilon_\grid^2$ which is
at most a constant. We mark a grid cell as \emph{free} if it has no overlapping
with any container and \emph{non-free} otherwise. We delete all the non-free cells
and the free cells will serve as the area containers that we use to pack the
small items.

\begin{lemma}
For some choice of $\epsilon_{\mini}$ and $\epsilon_\ra$, the total area of non-free cells is
at most
\[
    \min\left\{(1-\epsilon),\areasym(S_\corr)+3\epsilon^2\right\}
\]
\end{lemma}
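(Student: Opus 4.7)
The plan is to bound the area of the union of non-free cells by charging each container the area of the grid cells it may touch, and then to invoke \Cref{cont-area-bound} after fixing the auxiliary constants in the right order. For a single container of dimensions $w \times h$ (so $w, h \le 1$), its projection on each axis meets at most $\ceil{w/\epsilon_\grid} + 1 \le w/\epsilon_\grid + 2$ grid columns (respectively rows), so all cells it overlaps lie inside an axis-aligned rectangle of area at most
\[
(w + 2\epsilon_\grid)(h + 2\epsilon_\grid) \;\le\; wh + 4\epsilon_\grid + 4\epsilon_\grid^2.
\]
Every non-free cell overlaps some container, so summing over all $N$ containers gives
\[
\textrm{(area of non-free cells)} \;\le\; \areasym(\textrm{containers}) + 4N(\epsilon_\grid + \epsilon_\grid^2),
\]
where $N = \bigOhConst{1/\epsilon,\, 1/\epsilon_\mammoth,\, 1/\eps_\ra}$ is the container-count bound established above.

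Next I would plug both upper bounds on $\areasym(\textrm{containers})$ coming from \Cref{cont-area-bound} into the estimate above, obtaining
\[
\textrm{(area of non-free cells)} \;\le\; \min\bigl\{\, 1 - 2\epsilon,\; \areasym(S_\corr) + \eps_\ra \,\bigr\} + 4N(\epsilon_\grid + \epsilon_\grid^2).
\]
To match the statement of the lemma it suffices to drive the additive slack $4N(\epsilon_\grid + \epsilon_\grid^2)$ below $\epsilon$ (for the first branch of the $\min$) and the combined slack $\eps_\ra + 4N(\epsilon_\grid + \epsilon_\grid^2)$ below $3\epsilon^2$ (for the second branch).

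Finally I would choose constants in the correct order. Take $\eps_\ra := \epsilon^2$; this freezes $N$ as a concrete constant depending only on $\epsilon$ and $\epsilon_\mammoth$. Then pick $\epsilon_\mini$ small enough that $\epsilon_\grid = 1/\ceil{\epsilon/\epsilon_\mini}$ satisfies $4N(\epsilon_\grid + \epsilon_\grid^2) \le \min\{\,\epsilon,\, 2\epsilon^2\,\}$; this is possible since $\epsilon_\grid \to 0$ as $\epsilon_\mini \to 0$ while $N$ has already been fixed. Substituting back yields both desired upper bounds $1-\epsilon$ and $\areasym(S_\corr) + 3\epsilon^2$. The only real obstacle is bookkeeping: the dependency chain $(\epsilon, \epsilon_\mammoth) \to \eps_\ra \to N \to \epsilon_\mini \to \epsilon_\grid$ must be respected because $N$ depends on $\eps_\ra$ while $\epsilon_\grid$ depends on $\epsilon_\mini$; no geometric idea beyond the per-container bounding-box estimate is needed.
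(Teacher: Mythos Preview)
Your proposal is correct and follows essentially the same approach as the paper: bound the area of non-free cells by the total container area plus a slack term of order $N\epsilon_\grid$, invoke \Cref{cont-area-bound}, and then fix $\eps_\ra$ (of order $\epsilon^2$) before choosing $\epsilon_\mini$ small enough to kill the slack. The only cosmetic difference is that the paper counts the partially-covered cells via the container \emph{edges} (each of the $4N$ edges meets at most $1/\epsilon_\grid$ cells, contributing area $4N\epsilon_\grid$), whereas you enclose each container in a $(w+2\epsilon_\grid)\times(h+2\epsilon_\grid)$ bounding box; both yield the same $A + O(N\epsilon_\grid)$ estimate and the same dependency chain on the constants.
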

\begin{proof}
Let $A$ be the total area of containers and $k$ be the number of containers. The
total area of cells completely covered by containers is at most $A$. The partially
overlapped cells are the cells that intersect the edges of the containers. Since,
the number of containers is $k$ and each container has 4 edges and each edge can
overlap with at most $1/\epsilon_\grid$ number of cells,
the area of completely and partially overlapped cells is at most
\[
    A+4k\cdot\frac{1}{\epsilon_\grid}\cdot\epsilon_{\grid}^2
        = A+4k\cdot \frac{1}{\ceil{\epsilon/\epsilon_{\mini}}}
\]
As we noted before, $k=\bigOhConst{1/\epsilon,1/\epsilon_{\mammoth},1/\eps_\ra}$
which doesn't depend on $\eps_\mini$ in some sense.
Hence, we can choose a very small $\eps_\mini$ (when compared to $\epsilon,
\epsilon_{\mammoth},\eps_\ra$), and ensure that the above quantity
is at most $A+2\epsilon^2$.

By \Cref{cont-area-bound}, the value of $A$ is bounded by
$
    \min\left\{(1-2\epsilon),\areasym(S_\corr)+\epsilon_\ra\right\}.
$

Hence, the total area of deleted cells is at most
$
    \min\left\{(1-2\epsilon+2\epsilon^2),\areasym(S_\corr)+\epsilon_\ra+2\epsilon^2\right\}.
$

For $\epsilon<1/2$ and by choosing $\epsilon_{\ra}<\epsilon^2$, we get
the desired result.
\end{proof}

We now show that there is a significant area of free cells left to pack the small
items profitably.

\begin{lemma}
The area of free cells is at least $(1-3\epsilon)\areasym(S_{\mini})$
\end{lemma}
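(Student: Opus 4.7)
The plan is to combine the two upper bounds from the previous lemma on the non-free area, apply case analysis on $\areasym(S_{\mini})$, and use the fact that $S_{\corr}$ and $S_{\mini}$ are disjoint subsets of the originally packable set $S$ so that $\areasym(S_{\corr}) + \areasym(S_{\mini}) \le 1$.

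The previous lemma gives
\[
\text{(area of non-free cells)} \le \min\left\{1-\epsilon,\; \areasym(S_{\corr}) + 3\epsilon^2\right\},
\]
so the area of free cells is at least $\max\{\epsilon,\; 1 - \areasym(S_{\corr}) - 3\epsilon^2\}$. I will then split into two cases. If $\areasym(S_{\mini}) \le \epsilon$, then the first branch of the maximum already gives free area at least $\epsilon \ge \areasym(S_{\mini}) \ge (1-3\epsilon)\areasym(S_{\mini})$, as desired. If instead $\areasym(S_{\mini}) > \epsilon$, I use the second branch: since $S_{\corr}$ and $S_{\mini}$ are disjoint subsets of $S$ and $S$ was feasibly packed in the unit knapsack, $\areasym(S_{\corr}) \le 1 - \areasym(S_{\mini})$, which gives free area at least $\areasym(S_{\mini}) - 3\epsilon^2$. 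The inequality $\areasym(S_{\mini}) - 3\epsilon^2 \ge (1-3\epsilon)\areasym(S_{\mini})$ reduces to $\areasym(S_{\mini}) \ge \epsilon$, which holds in this case.

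The only mildly subtle step is noticing that neither single bound from the previous lemma suffices on its own: the bound $\areasym(S_{\corr}) + 3\epsilon^2$ fails when $\areasym(S_{\mini})$ is very small (the additive $3\epsilon^2$ slack then dominates), while the bound $1-\epsilon$ fails when $\areasym(S_{\mini})$ is close to $1$. Taking the minimum of the two and splitting on the threshold $\areasym(S_{\mini}) \lessgtr \epsilon$ is exactly what matches the two regimes. I expect no further technical obstacles; the argument is just a short two-case calculation once the disjointness $S_{\corr} \cap S_{\mini} = \emptyset$ and the packability $\areasym(S) \le 1$ are invoked.
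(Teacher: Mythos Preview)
Your proof is correct and essentially identical to the paper's own argument: the paper also splits on whether $\areasym(S_{\mini}) \ge \epsilon$, uses $\areasym(S_{\mini}\cup S_{\corr})\le 1$ in the large case to get free area $\ge \areasym(S_{\mini})-3\epsilon^2 \ge (1-3\epsilon)\areasym(S_{\mini})$, and uses the $1-\epsilon$ bound in the small case to get free area $\ge \epsilon \ge \areasym(S_{\mini})$.
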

\begin{proof}
Since $S_{\mini}$ and $S_\corr$ were packable in the knapsack,
$\areasym(S_{\mini}\cup S_{\corr}) \le 1$.

Now if $\areasym(S_\mini)\ge \epsilon$, then the area of free cells is at least
$1-\areasym(S_{\corr})-3\epsilon^2 \ge \areasym(S_{\mini})-3\epsilon^2 \ge (1-3\epsilon) \areasym(S_{\mini})$

On the other hand, if $\areasym(S_{\mini}) < \epsilon$, then the area of free cells is
at least $\epsilon$ which in turn is at least $\areasym(S_{\mini})$.
\end{proof}

Each free cell has dimensions $\epsilon_\grid\times \epsilon_\grid$ and a
small item has each side at most $\epsilon_\mini$. Hence, all small items
are $\epsilon$-small for the created free cells and these can be used to pack
the small items. Using NFDH, we can pack a very high profit in the area
containers as described in the following lemma.

\begin{lemma}
Let $I$ be a set of items and let there be $k$ identical area containers
such that each item $i\in I$ is $\epsilon$-small for every container
and the whole set $I$ can be packed in these containers. Then there
exists an algorithm which packs a subset $I'\subseteq I$ in these area containers
such that $\itemprofit{I'} \ge (1-2\epsilon)\itemprofit{I}$.
\end{lemma}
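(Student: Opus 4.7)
My plan is to combine the classical NFDH small-items guarantee with a profit-density greedy. Recall the static NFDH fact: on a container of dimensions $a \times b$, NFDH successfully packs any set of $\epsilon$-small items whose total area is at most $(1-2\epsilon)ab$. I will use this together with the hypothesis $\areasym(I) \le kab$ (which follows from packability of $I$ in $k$ such containers) to extract most of the profit into $I'$.

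I would sort the items of $I$ in non-increasing order of profit-to-area ratio $\itemprofit{i}/\itemarea{i}$, and then process items in this order. For each item, I attempt to assign it to some container $c \in [k]$ by re-running NFDH from scratch on the items currently assigned to $c$ together with the new item; if NFDH succeeds for some $c$, commit the assignment, and otherwise halt and output $I'=$ the items committed so far. This makes $I'$ a prefix of the density-sorted list.

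For the analysis, if the algorithm never halts then $I'=I$ and the bound is trivial. Otherwise, let $i^*$ be the first rejected item. The static NFDH guarantee, combined with $\itemarea{i^*} \le \epsilon^2 ab$ (because $i^*$ is $\epsilon$-small in both dimensions), forces every container to already hold items of total area more than $(1-2\epsilon)ab - \itemarea{i^*}$; summing over the $k$ containers and using $\areasym(I) \le kab$ gives $\areasym(I')/\areasym(I) \ge 1 - 2\epsilon - \epsilon^2$. Since $I'$ is a density-sorted prefix, the standard averaging inequality $\itemprofit{I'}/\areasym(I') \ge \itemprofit{I}/\areasym(I)$ holds, and multiplying these ratios yields $\itemprofit{I'} \ge (1 - 2\epsilon - \epsilon^2)\itemprofit{I}$, matching the claimed bound up to a lower-order term that one absorbs by a mild reparametrization of $\epsilon$ or by invoking a marginally tighter NFDH area bound.

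The main subtlety, rather than a genuine obstacle, is that NFDH is normally described as a batch algorithm (items first sorted by decreasing height), so one cannot literally "append" a new item to an existing NFDH packing. I would sidestep this by re-invoking the static NFDH guarantee on the entire current assignment at each step: this adds computational work but not conceptual delicacy, since the lemma only asks for existence of a good packing algorithm. A secondary nuisance is verifying the averaging inequality for a prefix in the density order, but this is the standard one-line argument $P_1 A_2 \ge A_1 P_2 \iff P_1/A_1 \ge P_2/A_2$ applied with $(P_1, A_1) = (\itemprofit{I'}, \areasym(I'))$ and $(P_2, A_2) = (\itemprofit{I \setminus I'}, \areasym(I \setminus I'))$.
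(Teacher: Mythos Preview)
Your proposal is correct and follows essentially the same approach as the paper: sort by profit density, greedily pack via NFDH, and combine the NFDH area guarantee with the mediant/prefix-density inequality. The only cosmetic differences are that the paper processes containers in next-fit fashion rather than trying all containers per item, and it directly invokes the $(1-\epsilon)^2$ NFDH area threshold, which eliminates the extra $\epsilon^2$ term you noted and yields exactly $(1-2\epsilon)$ without reparametrization.
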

\begin{proof}
\WLoG, assume that each container has dimensions $1\times 1$. So, we can
assume that every item
$i\in I$ has a width of at most $\epsilon$ and a height of at most $\epsilon$.
Order the items in $I$ in non-increasing ratio of profit/area (we call this
profit density) breaking ties
arbitrarily and also order the containers arbitrarily.

Start packing items into the containers using NFDH. If we are able to pack all the
items in $I$, we are done. Otherwise, consider an arbitrary container $C$. Let $I_C$ be
the items we packed in $C$ and let $i_C$ be the item we could not pack in $C$.
Then by \Cref{lem:nfdh-small}, $\areasym(I_C\cup \{i_C\}) > (1-\epsilon)^2\areasym(C) = (1-\epsilon)^2$.
But since $\areasym(i_C) \le \epsilon^2$, $\areasym(I_C)>1-2\epsilon$. Hence, we have
packed at least $(1-2\epsilon)$ fraction of the total area of the containers with
the densest items and thus the claim follows.
\end{proof}

To this end, we have packed at least $(1/2-\bigOhConst{\epsilon})$ fraction of the
total profit in the original packing assuming that we can leave out a constant
number of items at no cost.

\subsubsection{Shifting Argumentation}

We assumed that we can drop $\bigOhConst{1/\epsilon}$ number of items from $S$ at
no profit loss. But this may not be true because they
can carry significant amount of profit
in the original packing. The left out constant number of items are precisely
the big items and items in $S_{\cross}^{\nice}$ and the items
partially overlapping with the removed strip in \Cref{strip-removal}.

The main idea is to fix some items in the knapsack and then carry out our
algorithm discussed in the previous sections with some modifications. Again,
we may leave out some items with high profit. We fix these items too and repeat
this process. We can argue that at some point, the left out items have a very
less profit and hence this process will end.

Let us define the set $K(0)$ as the set of items that were removed in the above
process. If the total profit of $K(0)$ is at most $\epsilon\cdot \itemprofit{S}$, then we
are done. If this is not the case, then we use shifting argumentation.

Assume that we completed the $t\Th$ iteration for some $t>0$ and say
$\itemprofit{K(r)}>\epsilon\cdot \itemprofit{S}$ for all $0\le r\le t$. Let $\mathcal{K}(t) = \bigcup_{r=0}^t K(t)$.
We will argue that for every $r\le t$, $\abs{K(r)}$ is at most a constant. Then
$\abs{\mathcal{K}(t)}$ is also at most a constant if $t<\ceil{1/\epsilon}$
(in fact, we will argue that $t$ will not go beyond $\floor{1/\epsilon}$).

Let us define a non-uniform grid $\mathcal{G}(t)$ induced by the set $\mathcal{K}(t)$ as follows:
The $x$ and $y$ coordinates of the grid are given by all the corners of the items
in $\mathcal{K}(t)$. Note that the number of horizontal (resp. vertical) grid lines
is bounded by $2\abs{\mathcal{K}(t)}$. This grid partitions the knapsack into
a set of cells $\mathcal{C}(t)$. Since $\abs{\mathcal{K}(t)}$ is at most a constant,
the number of grid cells created is also at most a constant.

Let us denote an arbitrary grid cell by $C$ and the items in $S$ which
intersect $C$ and which are not in $\mathcal{K}(t)$ by $S(C)$.
For an item $i$ in $S(C)$,
let $\itemheight{i\cap C}$ and $\itemlength{i\cap C}$
denote the width and height of the overlap of an item
$i$ with $C$. We categorize the items in $S(C)$ relative to $C$ as follows.
\begin{itemize}
    \item $S_{\mini}(C)=\left\{i\in S(C): \itemheight{i\cap C}\le \epsilon_{\mini} \itemheight{C} \andtext \itemlength{i\cap C}\le \epsilon_{\mini} \itemlength{C}\right\}$
    \item $S_{\mammoth}(C)=\left\{i\in S(C): \itemheight{i\cap C}> \epsilon_{\mammoth} \itemheight{C} \andtext \itemlength{i\cap C}> \epsilon_{\mammoth} \itemlength{C}\right\}$
    \item $S_{\hor}(C)=\left\{i\in S(C): \itemheight{i\cap C}\le \epsilon_{\mini} \itemheight{C} \andtext \itemlength{i\cap C}> \epsilon_{\mammoth} \itemlength{C}\right\}$
    \item $S_{\ver}(C)=\left\{i\in S(C): \itemheight{i\cap C}> \epsilon_{\mammoth} \itemheight{C} \andtext \itemlength{i\cap C}\le \epsilon_{\mini} \itemlength{C}\right\}$
\end{itemize}
We call an item $i$ \emph{small} if it is not in $S_{\mammoth}(C)\cup S_{\hor}(C)\cup S_{\ver}(C)$ for some cell $C$.
We call an item $i$ \emph{big} if it is in $S_{\mammoth}(C)$ for some cell $C$.
We call an item $i$ \emph{wide} (resp. \emph{tall}) if it is in $S_{\hor}(C)$ (resp. $S_{\ver}(C)$) for some cell $C$.

We call an item $i$ \emph{medium} if there is a cell $C$ such that
$\itemheight{i\cap C}\in\left(\epsilon_{\mini} \itemheight{C}, \epsilon_{\mammoth} \itemheight{C}\right]$ or
$\itemlength{i\cap C}\in\left(\epsilon_{\mini} \itemlength{C}, \epsilon_{\mammoth} \itemlength{C}\right]$.

It is easy to observe that an item must belong to exactly one of small, big,
wide, tall, medium categories. Note that the width and height
of a small item are at most $\epsilon_{\mini}$. We call an item $i$ skewed if
it is either wide or tall.

Again by standard arguments, we can select $\eps_\mini$ and $\eps_\mammoth$
such that the profit of the medium items is at most $\eps\cdot p(S)$.
Hence, we remove all the medium items. We also remove all the small items for
now, but we will pack them at the end exactly as in \Cref{pack-small}.
We add all the big items to $K(t+1)$. We can do this because the big
items are at most constant in number: Consider any cell $C$. The number
of big items associated with it is at most a constant and the number of cells
themselves is at most a constant.

We create a corridor decomposition in the knapsack with respect to
the skewed items as follows: First we transform this non-uniform grid into
a uniform grid by moving the grid lines and simultaneously stretching or compressing
the items. This transformation is to ensure that every wide (resp. tall) item
associated with a cell $C$ has at least a width (resp. height) of
$\epsilon_{\mammoth}/(1+2\abs{\mathcal{K}(t)})$. We apply \Cref{corr-packing}
on the set of skewed items into a set of $\bigOhConst{1/\epsilon,1/\epsilon_{\mammoth}}$
corridors. Let $S_{\corr}, S_{\cross}^{\nice}, S_{\cross}^{\bad}$ be the sets obtained
similar to \Cref{corr-packing}. The set $S_\corr$ is the set of items that are
packed in the corridors completely. We add the set $S_{\cross}^{\nice}$ to $K(t+1)$
since they are constant in number. As the set $S_{\cross}^{\bad}$ has a very small
profit, we discard them. We also remove all the items in tall subcorridors
assuming, \wLoG, that their profit is at most that of the items in wide
subcorridors.

Now, we have the items in $\mathcal{K}(t)$ fixed inside the knapsack and the
items in boxes (obtained after deleting the tall items and hence deleting the
tall subcorridors). We would like to split the boxes
into containers as in \Cref{box-to-cont} but there is an issue:
There can be items in $\mathcal{K}(t)$ which
overlap with the boundaries of the boxes. But these are at most a constant
in number and hence we can resolve this issue in the following way.

Consider an item $i$ in $\mathcal{K}(t)$ partially overlapping with a box.
We extend each of its horizontal edge that is inside the box in both
the directions till the ends of the box. This extended edge and $i$ divide the
items in the box into at most five categories: The items intersecting
the extended edge, the items to the left
of $i$, the items to the right of $i$, the items above $i$ and the items
below $i$. We note that the items
in the first part are at most a constant in number and hence add them to $K(t+1)$.
Thus $i$ splits the box into at most four smaller boxes. We repeat this process
for all the items in $\mathcal{K}(t)$ partially overlapping with the box. We obtain
smaller boxes but with the required property that there are no partially
overlapping items in $\mathcal{K}(t)$ with the boxes. This is depicted in \Cref{box-division}.

We apply \Cref{strip-removal} to these smaller boxes and while doing this, we add all the
items partially overlapping with the removed strip to $K(t+1)$.
Then, we apply \Cref{ra-lemma} to these new smaller boxes to
split them into containers as in \Cref{box-to-cont}. At this point, the $(t+1)\Th$
round ends and we look at the set $K(t+1)$. If $\itemprofit{K(t+1)}\le \epsilon\cdot \itemprofit{S}$,
we end; otherwise we continue to round $t+2$. We can argue that the number of rounds
are at most $1/\epsilon$: $K(r)$ and $K(r+1)$ are disjoint for all $r\ge 0$. Hence, for some
$r<\ceil{1/\epsilon}$, we can guarantee that $\itemprofit{K(r)}$ is at most $\epsilon\cdot \itemprofit{S}$.

Therefore, after the $r\Th$ round, we end the process and we add the small items
to the packing as described in \Cref{pack-small}.

In \cite{l-packing}, it is also shown how to modify the formed containers
so that their widths and heights come from a set of cardinality $\poly(\abs{S})$.
The basic idea is as follows. First note that a large container
contains only one item. Hence, the size of the container can be shrunk to that
of the item. Now, consider a wide container $C_W$. Items are lying one on top of
another inside $C$. So, $h(C_W)$ can be expressed as
the sum of heights of all the items. Also, $w(c_W)$ can be expressed as
the maximum among all the widths of items in the container.
If there are at most $1/\eps$ number of items in $C_W$, then $h(C_W)$
can belong to a set containing at most $n^{1/\eps}$ values. If $C_W$ has more
than $1/\eps$ number of items, then we consider the tallest $1/\eps$ number of
items in $C_W$. Let this set be $T$. There must certainly exist an item $i\in T$
whose profit is at most $\eps p(C_W)$. We remove the item $i$ from the
container and readjust the height of $C_W$ so that its height belongs to
a polynomial set of values. Now consider an area container $C_A$. Since all
items in $C_A$ are $\eps$-small with respect to the dimensions of $C_A$, we
can leave out a subset of items carrying a very marginal profit and create some
space in both the horizontal and vertical directions so that the width and height
of $C_A$ can be readjusted. For full proof one can refer to
\cite{l-packing}.
This proves \Cref{structresult}.
\begin{figure}
\begin{center}
\begin{tikzpicture}[scale=2.0]
\coordinate (A) at (0,0);
\coordinate (B) at (2,0);
\coordinate (C) at (2,1);
\coordinate (D) at (0,1);

\coordinate (X) at (0.8,0.35);
\coordinate (Y) at (1.2,0.35);
\coordinate (Z) at (1.2,0.5);
\coordinate (W) at (0.8,0.5);

\coordinate (U) at (0,0.5);
\coordinate (V) at (0,0.35);

\draw [thick] (A)--(B)--(C)--(D)--cycle;
\draw [fill] (X)--(Y)--(Z)--(W)--cycle;

\coordinate (E) at (intersection of A--D and U--Z);
\coordinate (F) at (intersection of B--C and U--Z);

\coordinate (G) at (intersection of A--D and V--X);
\coordinate (H) at (intersection of B--C and V--X);

\draw [dashed] (E)--(F);
\draw [dashed] (G)--(H);

\draw [fill,fill opacity=0.5] (1.3,0.45)--(1.7,0.45)--(1.7,0.55)--(1.3,0.55)--cycle;
\draw [fill,fill opacity=0.5] (0.2,0.45)--(0.6,0.45)--(0.6,0.55)--(0.2,0.55)--cycle;
\draw [fill,fill opacity=0.5] (1.3,0.32)--(1.7,0.32)--(1.7,0.37)--(1.3,0.37)--cycle;
\draw [fill,fill opacity=0.5] (0.2,0.32)--(0.6,0.32)--(0.6,0.37)--(0.2,0.37)--cycle;
\end{tikzpicture}

\caption{Division of box into smaller boxes: The dark item is the item in $\mathcal{K}(t)$
that overlaps with the box. The dashed lines are the extended edges and the grey items
are those that will be included in $K(t+1)$}\label{box-division}
\end{center}
\end{figure}
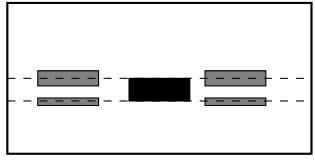

\subsection{Container Packing Problem and Vector-Max-GAP}
\label{cont-vecmaxgap}
From now on, our main goal would be to derive an algorithm to construct the
nice structure as in \Cref{structresult}. We first formally define the problem
of packing items into containers which we call the \emph{container packing problem}.
We then model the container packing problem as an instance of the \vmgap{} problem.
Since we have a PTAS for the \vmgap{} problem, it follows that there exists
a PTAS for the container packing problem.

Let $I$ be a set of items and let
$\lengthinp,\heightinp,\profitinp,\weightinp$ denote the associated widths, heights,
profits and weights respectively.

Let $C$ be a given set of containers such that the number of containers is
constant. Out of $C$, let $C_A, C_H, C_V, C_L$ denote area, wide, tall and large containers
respectively.

In the Container Packing Problem, we would like to pack a subset of
$I$ into these containers such that
\begin{itemize}
  \item A large container can either be empty or can contain exactly one item.
  \item In a wide (resp. tall) container, all the items must be
  stacked up one on top of another (resp. arranged side by side).
  \item The total area of items packed in an area container must not exceed
  $(1-\epsilon')^2$ times the area of the container itself (assume that
  $\epsilon'$ is a constant given as a part of the input).
  \item The total weight of items packed in all the containers in any dimension
  $j\in\{1,\dots,d\}$ should not exceed one.
\end{itemize}
We denote an instance of the container packing problem by the tuple
$\containerinstance{I}{C}{\epsilon'}$.

Now, let us define the Vector-Max-GAP problem. Let $I$ be a set of $n$ items and let
us say, we have $k$ machines such that the $j\Th$ machine has a capacity $M_j$. An item
$i$ has a size of $s_j(i)$, value of $\val_j(i)$ in the $j\Th$ machine and weight $w_q(i)$ in the
$q\Th$ dimension. The objective is to assign a maximum value
subset of items $I'\subseteq I$, each item to a machine, such that the total size of
items in a machine does not exceed the capacity of that machine. We also require
that the total weight of $I'$ does not exceed $W_q$ (a non-negative real) in any dimension $q\in[d]$.

Let $\vect{M} = [M_1,M_2,\dots,M_k]$, $\vect{W} = [W_1,W_2,\dots,W_d]$, $\vect{w}(i) = [w_1(i),w_2(i),\dots,w_d(i)]$, $\vect{s}(i) = [s_1(i),s_2(i),\dots,s_k(i)]$, $\vect{\val}(i) = [\val_1(i),\val_2(i),\dots,\val_k(i)]$.

Also, let $\vect{s} = [\vect{s}(1),\vect{s}(2),\dots,\vect{s}(n)]$, $\vect{w} = [\vect{w}(1),\vect{w}(2),\dots,\vect{w}(n)], \vect{\val} = [\vect\val(1),\vect\val(2),\dots,\vect\val(n)]$. We denote
an instance of Vector-Max-GAP by $(I, \vect{\val}, \vect{s}, \vect{w}, \vect{M}, \vect{W})$.

\subsection{Reduction of Container Packing Problem to Vector-Max-GAP}
\label{red-cont-vecmaxgap}
Let $\mathcal{C} = \containerinstance{I}{C}{\epsilon'}$ denote an instance of the container packing
problem. In this subsection, we show how to reduce $\mathcal{C}$ to an instance
$\mathcal{G} = (I, \overrightarrow{\val}, \overrightarrow{s}, \overrightarrow{w}, \overrightarrow{M}, \vect{W})$ of the Vector-Max-GAP problem.

We retain $I$ from $\mathcal{C}$. Since we have unit vector
constraints over all the containers combined in the container packing problem,
we initialize $\vect{W}$ to be the vector of all ones.
We choose the number of machines in $\mathcal{G}$ to be same as the number of containers in
$\mathcal{C}$. Let $\abs{C}=k$ and $\abs{I}=n$. The vectors $\overrightarrow{\val}, \overrightarrow{s}, \overrightarrow{M}$
are defined as follows:
\begin{align*}
s_j(i) &= \begin{cases}
    1 & \textup{if $C_j$ is a large container and $i$ can fit in inside $C_j$}\\
    \infty & \textup{if $i$ can not fit in inside $C_j$}\\
    \itemheight{i} & \textup{if $C_j$ is a wide container and $i$ fits in inside $C_j$}\\
    \itemlength{i} & \textup{if $C_j$ is a tall container and $i$ fits in inside $C_j$}\\
    \itemlength{i}\itemheight{i} & \textup{if $C_j$ is an area container
        and $i$ is $\epsilon'$-small for $C_j$}\\
    \infty & \textup{if $C_j$ is an area container but $i$ is not $\epsilon'$-small for $C_j$}\\
\end{cases} \\
M_j &= \begin{cases}
    1 & \textup{if $C_j$ is a large container}\\
    \itemheight{C_j} & \textup{if $C_j$ is a wide container}\\
    \itemlength{C_j} & \textup{if $C_j$ is a tall container}\\
    (1-\epsilon')^2\itemheight{C_j}\itemlength{C_j} & \textup{if $C_j$ is an area container}\\
\end{cases} \\
\val_j(i) &= \itemprofit{i} \\
w_q(i) &= v_q(i)
\end{align*}
In the above definitions of $s_j(i), M_j, \val_j(i)$ and $w_q(i)$,
$i$ varies from 1 to $n$, $j$ varies from 1 to $k$ and
$q$ varies from 1 to $d$.

Let $I'$ denote a subset of $I$ packed in a feasible solution to $\mathcal{C}$.
Then $I'$ can also be feasibly assigned to the machines of our Vector-Max-GAP problem:
Just assign all the items in a container $C_j$ to the $j\Th$ machine.

\begin{itemize}
\item If $C_j$ is a large container, then the only item packed into it has
size 1 in machine $M_j$ and capacity of $M_j$ is also 1 and hence assigning this
item to the $j\Th$ machine is feasible.
\item If $C_j$ is a wide (resp. tall) container, the items packed in
$C_j$ are wide and stacked up (resp. tall and arranged side by side).
Hence their total height (resp. width), which is the total size of items assigned
to the $j\Th$ machine, does not exceed the total height (resp. width) of the
container, which is the capacity of the $j\Th$ machine.
\item If $C_j$ is an area container, the total area of items packed in $B_j$
does not exceed $(1-\epsilon')^2\cdot \areasym(C_j)$ which yields that the total
size of items assigned to the $j\Th$ machine does not exceed the
capacity of the $j\Th$ machine.
\end{itemize}

The total weight of all the items assigned to machines does not exceed
$\overrightarrow{W}$ (which is equal to the all ones vector) as we did not
change $\weightinp$ while reducing $\mathcal{C}$ to $\mathcal{G}$.
This proves that the optimal value obtainable for $\mathcal{G}$ is at least
as much as that of the container packing problem $\mathcal{C}$.

On the other hand, consider any feasible assignment of a subset of items $J\subseteq I$
to the machines in our instance $\mathcal{G}$. Let $J_j$ be the subset
of items assigned to the $j\Th$ machine. We can pack $J_j$ into container
$C_j$ in the following way: Since the assignment is feasible, the size of all
items in $J_j$ does not exceed the capacity of $M_j$. If $C_j$ is wide
(resp. tall), $\sum_{i\in J_j}\itemheight{i} \le \itemheight{C_j}$
(resp. $\sum_{i\in J_j}\itemlength{i} \le \itemlength{C_j}$).
Hence, all the items in $J_j$ can be stacked up (resp. arranged side by side)
inside the container $C_j$. If $C_j$ is an area
container, then $J_j$ consists of only small items which
are $\epsilon'$-small for $C_j$ and $\areasym(J_j) \le (1-\epsilon')^2\cdot \areasym(C_j)$.
Hence, by \cref{lem:nfdh-small}, we can pack the set
$J_j$ into $C_j$ using NFDH. If an item is assigned
to a large container, then it occupies the whole capacity (since item size and
machine capacity are both equal to 1) and hence, only a single item can be
packed in a large container.

The above arguments prove that the container packing problem is a special
instance of the Vector-Max-GAP problem. Hence, we can use the PTAS for the
\vmgap{} problem from \cref{vec-max-gap} to obtain a PTAS for the container
packing problem.

The reduction in the case when rotations are allowed is exactly the same except for the
values of $s_j(i)$ in the reduction of Container Packing Problem to an instance
of the Vector-Max-GAP problem: If $C_j$ is a tall (resp. wide)
container,
\begin{align*}
s_j(i) &= \begin{cases}
            \infty& \textup{if $i$ can fit neither with rotation nor without rotation}\\
            \itemlength{i}\:(\mathrm{resp.}\:\itemheight{i})& \textup{if $i$ fits without rotation but not with rotation}\\
            \itemheight{i}\:(\mathrm{resp.}\:\itemlength{i})&\textup{if $i$ fits with rotation but not without rotation}\\
            \min\{\itemlength{i},\:\itemheight{i}\}& \textup{if $i$ fits both with and without rotation}\\
          \end{cases}
\end{align*}
If $C_j$ is a large container, we set $s_j(i)=\infty$ if $i$ does not fit in $C_j$ with or
without rotations. Otherwise we set $s_j(i)$ to 1. In case of $C_j$ being an area container,
$s_j(i)$ is same as the case without rotations. The correctness of the reduction
follows by similar arguments as above.
\begin{theorem}
\label{ptascontainer}
There exists a PTAS for the container packing problem with or without rotations.
\end{theorem}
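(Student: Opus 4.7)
The plan is to apply the reduction given in \Cref{red-cont-vecmaxgap} together with the PTAS for Vector-Max-GAP established in \Cref{vec-max-gap}. Given an instance $\mathcal{C} = \containerinstance{I}{C}{\epsilon'}$ of the container packing problem, I would first construct the Vector-Max-GAP instance $\mathcal{G} = (I, \vect{\val}, \vect{s}, \vect{w}, \vect{M}, \vect{W})$ exactly as prescribed: one machine per container, size function $s_j$ depending on the container type (height for wide, width for tall, area for area containers, unit for large, and $\infty$ whenever the item cannot be geometrically placed in the container), capacities $M_j$ likewise, values $\val_j(i) = p(i)$, and weights $w_q(i) = v_q(i)$ with $\vect{W}$ set to the all-ones vector.

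Next, I would appeal to the two-directional equivalence already argued in \Cref{red-cont-vecmaxgap}: any feasible packing of $J \subseteq I$ in the containers induces a feasible assignment in $\mathcal{G}$ of the same profit (the stacking/side-by-side constraints become size constraints, and the unit global weight constraints transfer directly), and, conversely, any feasible assignment $J_j$ of items to machine $j$ in $\mathcal{G}$ can be realized as a packing inside $C_j$: wide items stack vertically, tall items line up horizontally, a single item fills a large container, and for area containers the $\eps'$-smallness together with the area budget $(1-\eps')^2 \areasym(C_j)$ permits NFDH to pack $J_j$ by \cref{lem:nfdh-small}. This shows $\opt(\mathcal{G}) = \opt(\mathcal{C})$.

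Now I would invoke the PTAS for Vector-Max-GAP to obtain, in polynomial time, an assignment of value at least $(1-(2d+3)\eps)\opt(\mathcal{G}) = (1-(2d+3)\eps)\opt(\mathcal{C})$; since $k = |C|$ and $d$ are constants, the running time remains polynomial in $|I|$. Translating this assignment back via the constructive direction above yields a container packing of the same value, which gives a PTAS. For the with-rotations variant I would repeat the same argument, using the modified definition of $s_j(i)$ given at the end of \Cref{red-cont-vecmaxgap} (taking the smaller of the two feasible orientations whenever both fit). The correctness of the reverse translation is unchanged: once Vector-Max-GAP chooses a machine, the dimension realizing $s_j(i)$ determines the orientation used when placing $i$ inside $C_j$.

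The only mild obstacle is bookkeeping: one must check that the ``$\infty$'' entries in $\vect{s}$ do not interfere with the PTAS (they simply prevent infeasible assignments, since any such choice violates the machine capacity) and that the PTAS of \Cref{vec-max-gap} accommodates nonuniform value functions $\val_j$ across machines, as well as the $\eps'$-dependent gating built into the size function. Both are already handled by the Vector-Max-GAP framework, so no further technical work is required and the theorem follows.
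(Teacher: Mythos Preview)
Your proposal is correct and follows essentially the same approach as the paper: the theorem is stated immediately after the reduction in \Cref{red-cont-vecmaxgap}, and the paper's proof is precisely to invoke that two-directional reduction together with the PTAS for \vmgap{} from \Cref{vec-max-gap}, with the rotation case handled by the modified size function you describe. Your additional remarks about the $\infty$ entries and nonuniform values are sound sanity checks but not needed beyond what the paper already establishes.
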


\subsection{Algorithm}
\label{mixknap-algo}
Our main goal is to search for the container packing structure in \Cref{structresult}.

For this we need to guess the set of containers used to pack the items in $I'_{\opt}$ of
(\ref{halfprofit}). As mentioned in \cref{geomvecks-struct}, the number of containers
used is at most a constant (let this number be denoted by $c$) and they have to be
picked from a set whose cardinality is in $\poly(\abs{I})$
(let this cardinality be denoted by $q(\abs{I})$). Therefore, the number of guesses
required is $\binom{q(\abs{I})+c}{c}$ which is in turn in $\poly(\abs{I})$.

Once we have the containers, we need to guess which of them are large containers, wide
containers, tall containers and area containers. The number of guesses required
is at most $\binom{c+4}{c}$ which is a constant.

Then we use the PTAS of the container packing problem with approximation factor
$(1-\epsilon_{\mathrm{cont}})$, and since the optimal profit of the container packing
problem is at least $(1/2-\epsilon_\struct)\cdot \optgvks(I)$, by choosing
$\epsilon_{\mathrm{cont}}:=\epsilon$ and $\epsilon_{\struct}:=\epsilon/2$, we
get the desired approximation factor of $(1/2-\epsilon)$ for the \geomvec{2}{$d$} knapsack
problem.

\section{Conclusion}
We study the $(2,d)$ Knapsack problem and design a $(2+\epsilon)$ approximation
algorithm using corridor decomposition and the PTAS for
\vmgap{} problem, which could be of independent interest.
We believe that the approximation ratio might be improved to
$1.89$, however, it will require involved applications of complicated techniques like
$L$-packings as described in \cite{l-packing}.
One might also be interested in studying a further generalization of the problem, the
$(d_G,d_V)$ Knapsack where items are $d_G$ dimensional hyper-rectangles
with weights in $d_V$ dimensions.
One major application of our result will be in designing approximation algorithms for
$(2,d)$ Bin Packing problem, a problem of high practical relevance.
In \cite{mixed-bin-packing}, we use the presented approximation algorithm for $(2,d)$ Knapsack as a subroutine
to obtain an approximation algorithm for the $(2,d)$ Bin Packing problem.

\bibliographystyle{plainurl}
\bibliography{bibdb}
\newpage
\appendix
\section{Appendix}

\subsection{Next Fit Decreasing Heights (NFDH)}
NFDH, introduced in \cite{coffman1980performance},
is a widely used algorithm to pack rectangles in a bigger rectangular bin. It works
as follows. First, we order the rectangles in the decreasing order of their heights.
We then place the rectangles greedily on the base of the bin until we do not cross
the boundaries of the bin. At this point, we ``close'' the shelf and shift the
base to top edge of the first rectangle and continue the process. For an
illustration of the algorithm, see \cref{fig:nfdh}.
\begin{figure}[H]
\centering
\begin{tikzpicture}[scale=3]
\draw (0,0) rectangle (1,1);
\draw[fill=black!30] (0,0) rectangle ++(0.3,0.5);
\draw[fill=black!30] (0.3,0) rectangle ++(0.5,0.4);
\draw[fill=black!30] (0,0.5) rectangle ++(0.5,0.3);
\draw[fill=black!30] (0.5,0.5) rectangle ++(0.1,0.2);
\draw[fill=black!30] (0.6,0.5) rectangle ++(0.4,0.15);
\end{tikzpicture}
\caption{The NFDH Algorithm. After packing the first two items on the base of the
bin, the third item can't be packed on the same level. Hence, we close the shelf
and create a new shelf.}
\label{fig:nfdh}
\end{figure}
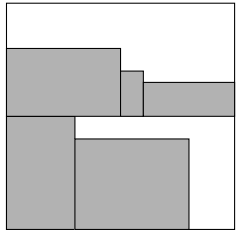
Now, we state a well-known and important lemma about NFDH. 

\begin{lemma}\cite{bansal2009structural}.
Let $S$ be a set of rectangles and let $w$ and $h$ denote the largest width
and largest height in $S$, respectively. Consider a bin of dimensions $W\times H$. If
$\itemarea{S}\le (W-w)(H-h)$, then NFDH packs all the rectangles into the bin.
\end{lemma}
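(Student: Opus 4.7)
The plan is to argue the contrapositive: if NFDH fails to place some item from $S$, then $\itemarea{S} > (W-w)(H-h)$. First I would set up the NFDH shelf structure carefully. NFDH processes items in non-increasing height order, greedily placing them left-to-right on a "shelf"; a shelf is opened by its first (and hence tallest) item, whose height becomes the shelf height, and is closed as soon as the next item does not fit horizontally, at which point the next shelf is stacked directly above. So any run of NFDH produces shelves $S_1, S_2, \ldots, S_k$ of heights $h_1 \ge h_2 \ge \cdots \ge h_k$, and failure means there is an item $j$ (of height $h_j \le h_k$ and width $w_j \le w$) that does not fit on $S_k$, and opening a new shelf above $S_k$ would violate the bin height: $\sum_{i=1}^{k} h_i + h_j > H$.

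The heart of the proof is a per-shelf area lower bound. For $1 \le i < k$, shelf $S_i$ was closed because the first item of $S_{i+1}$ (of width at most $w$) did not fit on it; hence the widths on $S_i$ sum to strictly more than $W - w$. Since NFDH processes items in non-increasing height order and all items of $S_i$ were placed before the first item of $S_{i+1}$, every item on $S_i$ has height at least $h_{i+1}$; thus the total area of items on $S_i$ exceeds $(W-w)\,h_{i+1}$. Applying the same reasoning to shelf $S_k$ and the failing item $j$ gives that the area on $S_k$ exceeds $(W-w)\,h_j$. Summing,
$$\itemarea{S} > (W-w)\bigl(h_2 + h_3 + \cdots + h_k + h_j\bigr) = (W-w)\!\left(\sum_{i=1}^{k} h_i + h_j - h_1\right) > (W-w)(H - h_1) \ge (W-w)(H - h),$$
where the second inequality uses the failure condition and the last uses $h_1 \le h$. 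This contradicts the hypothesis $\itemarea{S} \le (W-w)(H-h)$.

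The only subtle point, and the one I would be careful about, is the telescoping bookkeeping: the per-shelf area lower bound for $S_i$ must be phrased in terms of $h_{i+1}$ (the height of the item that \emph{triggers} the closure), not $h_i$, so that when summed over $i$ the heights $h_2, \ldots, h_k, h_j$ combine with the failure inequality $\sum h_i + h_j > H$ to leave exactly the slack $H - h_1$. Getting the indices or the strict/non-strict inequalities wrong is the easy way to end up with a spurious factor of $h$ on the wrong side; everything else is straightforward.
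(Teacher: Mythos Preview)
Your argument is correct and is the standard shelf-by-shelf accounting for NFDH. However, the paper does not actually prove this lemma: it is stated with a citation to \cite{bansal2009structural} and used as a black box, so there is no in-paper proof to compare against. Your write-up would serve as a self-contained proof of the cited result.
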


The above lemma suggests that NFDH works very well if all the rectangles are small 
compared to the biun dimensions. 
The following lemma will be crucial for our purposes.
\begin{lemma}
\label{lem:nfdh-small}
Consider a set of rectangles $S$ with associated profits
and a bin of dimensions $W\times H$ and assume that
that each item in $S$ has width at most $\epsilon W$ and height at most $\epsilon H$.
Suppose $\opt$ denotes the optimal profit that can be packed in the bin. Then there
exists a polynomial time algorithm that packs at least $(1-2\epsilon)\opt$
profit into the bin.
\end{lemma}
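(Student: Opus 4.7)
The plan is to run a density-greedy selection followed by NFDH packing. First I sort the items of $S$ in non-increasing order of profit density $\itemprofit{i}/\itemarea{i}$, and let $A$ be the longest prefix of this sorted list whose total area is at most $(1-\eps)^2 WH$. Then I pack $A$ into the bin using NFDH and return it.

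The packing step is justified by the preceding NFDH lemma: every item has width at most $\eps W$ and height at most $\eps H$, so if $w,h$ denote the largest width and height in $A$, then $(W-w)(H-h)\ge (1-\eps)^2 WH \ge \itemarea{A}$ and NFDH places all of $A$ inside the bin.

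For the approximation ratio, the case $A=S$ is trivial since $\itemprofit{A}=\itemprofit{S}\ge \opt$. Otherwise the greedy halts at some item $j\notin A$ with $\itemarea{A}+\itemarea{j}>(1-\eps)^2 WH$, and $\itemarea{j}\le \eps^2 WH$ forces $\itemarea{A}>(1-2\eps)WH$. To compare $\itemprofit{A}$ with $\opt$, I upper-bound $\opt$ by the fractional-knapsack optimum $F(WH)$ on $S$ with area as weight and capacity $WH$: any geometrically feasible subset has area at most $WH$, so $\opt$ is at most the integer-knapsack optimum at this capacity, which in turn is at most $F(WH)$. Because the items are sorted by density, $F(WH)$ equals $\itemprofit{A}$ plus the profit of (possibly fractional) items used to fill the leftover area $WH-\itemarea{A}<2\eps WH$, each of density at most $d_{\min}$, the minimum density in $A$. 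Combined with $\itemprofit{A}\ge d_{\min}\itemarea{A}>d_{\min}(1-2\eps)WH$, this yields $F(WH)-\itemprofit{A}<2\eps\,\itemprofit{A}/(1-2\eps)$, hence $\opt\le F(WH)<\itemprofit{A}/(1-2\eps)$, i.e., $\itemprofit{A}>(1-2\eps)\opt$.

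The main subtlety will be the fractional-knapsack upper bound, which crucially exploits that items are $\eps$-small: this ensures the boundary item in the fractional relaxation carries only $\eps^2 WH$ area and does not disrupt the area accounting. Without this smallness, density-greedy would give only the usual $1/2$-approximation rather than $(1-O(\eps))$.
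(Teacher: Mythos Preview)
Your proof is correct and follows essentially the same approach as the paper: sort by profit density, take the longest prefix with area at most $(1-\eps)^2WH$, pack it with NFDH via the preceding lemma, and argue that this prefix captures at least a $(1-2\eps)$ fraction of $\opt$. The paper asserts the last inequality in one line, whereas you spell it out via the fractional-knapsack upper bound (and you also explicitly handle the trivial case $A=S$, which the paper glosses over); this extra care is fine and does not constitute a different method.
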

\begin{proof}
Let us first order the rectangles in the decreasing order of their profit/area
ratio. Then pick the largest prefix of rectangles $T$ such that
$\itemarea{T}\le (1-\epsilon)^2WH$. By the above lemma, NFDH must be able to pack
all the items in $T$ into the bin. On the other hand, since each rectangle has area
at most $\epsilon^2WH$, it follows that $\itemarea{T}\ge(1-2\epsilon)WH$.
Furthermore, since $T$ contains the highest profit density items, it follows that
$\itemprofit{T}\ge(1-2\epsilon)\opt$.
\end{proof}
\end{document}